\newcommand{\spara}[1]{\smallskip\noindent{\bf #1}}
\newtheorem{definition}{Definition}
\newtheorem{lemma}{Lemma}
\newtheorem{theorem}{Theorem}
\newtheorem{corollary}{Corollary}
\newtheorem{example}{example}
\newcommand{\lse}{\textsf{LSE}}
\newcommand{\Naive}{\textsf{Nai\"ve}}
\newcommand{\Horizontal}{\textsf{LSE$^{H}$}}
\newcommand{\Vertical}{\textsf{LSE$^{HV}$}}
\newcommand{\Diagonal}{\textsf{LSE$^{HVD}$}}
\newcommand{\LSE}{\textsf{LSE}}
\begin{document}

\title{Cohesive Subgraph Discovery in Hypergraphs: A Locality-Driven Indexing Framework}

\author{Song Kim}
\affiliation{%
  \institution{UNIST}
  \country{South Korea}
}
\email{song.kim@unist.ac.kr}

\author{Dahee Kim}
\affiliation{%
  \institution{UNIST}
  \country{South Korea}
}
\email{dahee@unist.ac.kr}

\author{Taejoon Han}
\affiliation{%
  \institution{UNIST}
  \country{South Korea}
}
\email{cheld7132@unist.ac.kr}

\author{Junghoon Kim\textsuperscript{*}}
\affiliation{%
  \institution{UNIST}
  \country{South Korea}
}
\email{junghoon.kim@unist.ac.kr}

\author{Hyun Ji Jeong}
\affiliation{%
  \institution{Kongju National University}
  \country{South Korea}
}
\email{hjjeong@kongju.ac.kr}

\author{Jungeun Kim}
\affiliation{%
  \institution{Inha University}
  \country{South Korea}
}
\email{jekim@inha.ac.kr}

\thanks{\textsuperscript{*}Corresponding author.}

\begin{abstract}
  Hypergraphs, increasingly utilised for modelling complex and diverse relationships in modern networks, gain much attention representing intricate higher-order interactions. Among various challenges, cohesive subgraph discovery is one of the fundamental problems and offers deep insights into these structures, yet the task of selecting appropriate parameters is an open question. To handle that question, we aim to design an efficient indexing structure to retrieve cohesive subgraphs in an online manner. The main idea is to enable the discovery of corresponding structures within a reasonable time without the need for exhaustive graph traversals. This work can facilitate efficient and informed decision-making in diverse applications based on a comprehensive understanding of the entire network landscape. Through extensive experiments on real-world networks, we demonstrate the superiority of our proposed indexing technique.
\end{abstract}

\maketitle

%==================== Section 1 ====================

\section{INTRODUCTION}\label{sec:introduction}

Modelling relationships among multiple entities intuitively and effectively is a fundamental challenge in network analysis. Hypergraphs offer a powerful solution by capturing higher-order relationships among groups of entities, transcending the pairwise limitations of traditional graphs. Many real-world networks naturally fit into the hypergraph framework, such as co-authorship networks~\cite{lung2018hypergraph}, co-purchase networks~\cite{xia2021self}, and location-based social networks~\cite{kumar2020hpra}.

Given the expressive power of hypergraphs, hypergraph mining has attracted significant attention. In particular cohesive subhypergraph discoveriy has been actively researched, with various models being proposed. These include the $k$-hypercore~\cite{leng2013m}, $(k,l)$-hypercore~\cite{limnios2021hcore}, $(k,t)$-hypercore~\cite{bu2023hypercore}, nbr-$k$-core~\cite{arafat2023neighborhood}, $(k,d)$-core~\cite{arafat2023neighborhood}, and $(k,g)$-core~\cite{kim2023exploring}. The most recent model, the $(k,g)$-core, defines a maximal subhypergraph where each node has at least $k$ neighbours appearing together in at least $g$ hyperedges. Unlike previous models, which primarily focus on hyperedge cardinality, the $(k,g)$-core captures direct cohesiveness among neighbouring nodes, allowing for the discovery of more densely connected substructures.

Given the benefits, the $(k,g)$-core has broad applicability in various domains: 
(1) \ul{Team formation:} In scenarios of collaborative work or developing marketing strategies, the $(k,g)$-core can identify appropriate groups of participants that are closely connected through multiple shared projects or interactions, helping to form cohesive and effective teams.
(2) \ul{Community Search:} Query-centric community discovery is a central task in social network analysis~\cite{sozio2010community}, it mainly relies on models specifically designed for that purpose. Cohesive subgraph models like the $(k,g)$-core offer a robust framework for identifying tightly-connected substructures within a network~\cite{kim2020densely, akbas2017truss, fang2018effective, liu2021efficient, yao2021efficient}.
(3) \ul{Foundation for other hypergraph mining tasks:} The $(k,g)$-core can support tasks such as identifying densest subhypergraphs~\cite{huang1995clusters}, influence maximisation~\cite{arafat2023neighborhood}, and centrality measures~\cite{busseniers2014general}. Its robust structure enables optimising information spread and quantifying node importance in complex networks.

\begin{figure}[t]
\centering
% \vspace{-0.15cm}
\includegraphics[width=0.99\linewidth]{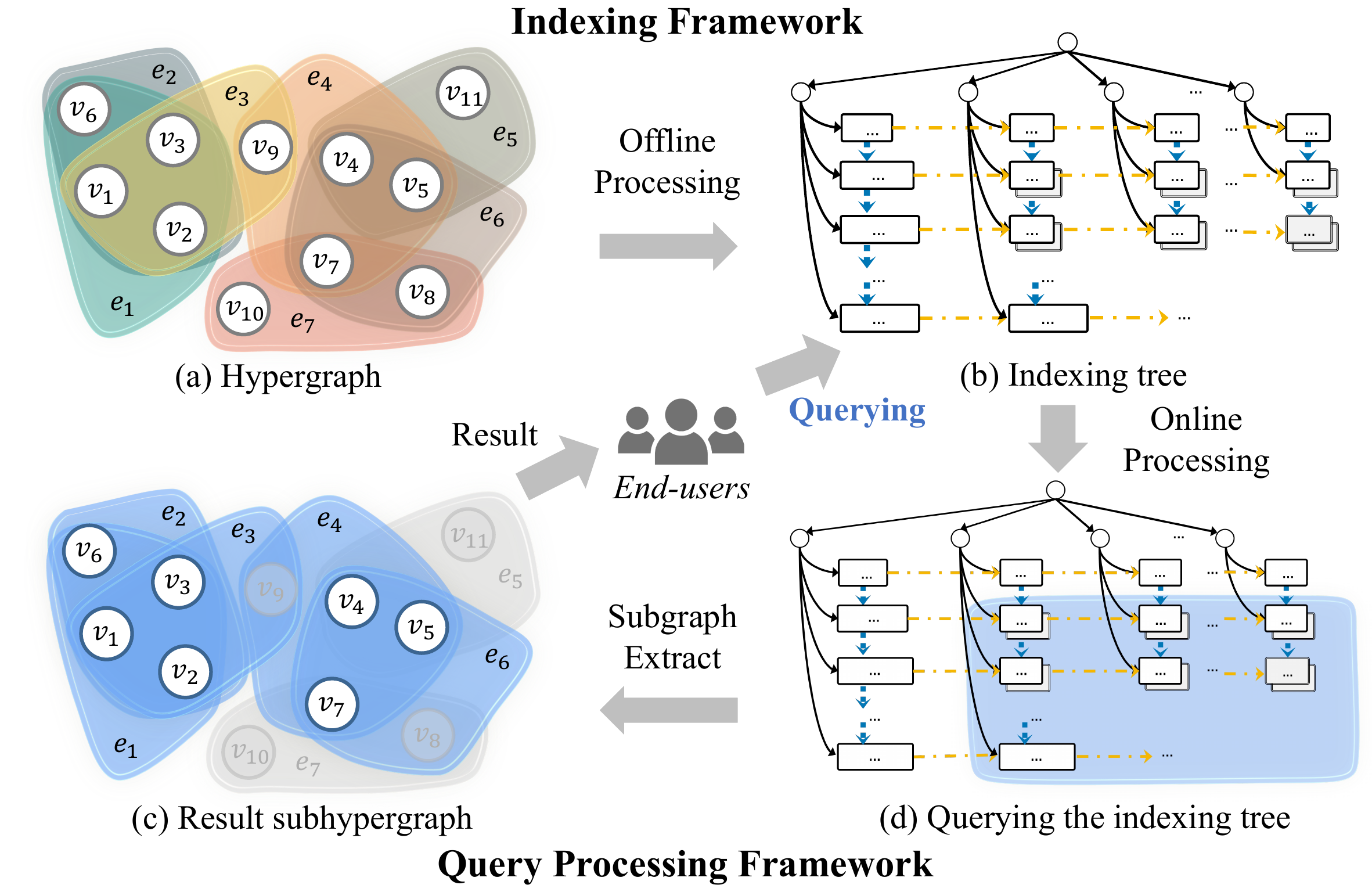}
\vspace{-0.2cm} 
\caption{An illustrative example of proposed framework} 
\vspace{-0.4cm}
\label{fig:graphs}
\end{figure}

Despite its utility, selecting the appropriate $(k,g)$ parameters for a certain task remains challenging. As studied in prior works~\cite{chu2020finding}, choosing suitable parameters for cohesive subgraph models is a difficult open question, especially when varying levels of cohesion are needed. One of the key difficulties arises from the fact that many real-world applications require online query processing~\cite{garcia2008database}, making it impractical to compute cohesive subgraphs from scratch for each query. The ability to rapidly adjust parameters based on user-specified criteria is critical, particularly in domains like social network analysis and recommendation systems, where timely responses are essential. Online processing also enables interactive exploration of hypergraphs allowing users to refine their queries based on intermediate results without waiting for long recomputation steps.

To address these challenges, we propose an indexing-based framework for cohesive subhypergraph discovery. This framework pre-computes an indexing structure over the hypergraph, allowing for efficient query processing by avoiding repetitive subgraph computations. By utilising this structure, we can process online queries for $(k,g)$-core computation, enabling applications in various domains. Our approach dramatically reduces the time complexity associated with recalculating cohesive subgraphs for different parameter values, providing a scalable solution that is well-suited for large-scale hypergraphs. The applications of indexing-based cohesive subgraph discovery include:
(1) \ul{Size-Bounded Cohesive Subgraph Discovery:} In scenarios like size-bounded team formation, conference planning, or social event organisation, it is crucial to assemble groups of participants who are closely connected and of appropriate community size~\cite{yao2021efficient, liu2021efficient}. The index-based approach can efficiently identify candidate groups that meet these criteria.
(2) \ul{User Engagement Analysis:} Social networks are often analysed to understand user engagement at multiple levels. By using the $(k,g)$-core and its index structure, we can evaluate user interactions across various dimensions, helping to identify which users are central to the network cohesion at different levels of connectivity. This insight can be useful for recognising influential users, understanding engagement trends, or even detecting abnormal behaviours like bot activity or coordinated misinformation campaigns.
(3) \ul{Fraud Detection:} In domains such as e-commerce or financial networks, fraudsters often operate within tightly knit groups that exhibit suspiciously high levels of interaction~\cite{pandit2007netprobe, he2022detecting}. The $(k,g)$-core can be applied to identify these densely connected groups of fraudulent actors, based on the frequency and nature of their interactions. By leveraging the indexing-based structure, it becomes feasible to detect such activities in online time, enhancing the ability to prevent fraud before it escalates.

Therefore, in this paper, we propose index-based $(k,g)$-core decomposition techniques, along with efficient online query processing algorithms that leverage these indexing structures. 
The overall framework of our proposed approach is illustrated in Figure~\ref{fig:graphs}. It has two main phases: In the (1) Indexing Framework: the hypergraph is preprocessed into a memory-efficient index structure; and (2) Query Processing Framework: users can perform online queries by dynamically adjusting parameters based on the precomputed index.

\spara{Challenges.} Constructing an efficient index for the $(k,g)$-core is challenging due to the interaction of two variables, $k$ and $g$, which creates significant overlap among nodes with similar coreness values. This results in redundancy and requires careful handling to maintain both space efficiency and query performance.

\begin{itemize}[leftmargin=*]
    \item \ul{Space-efficient indexing structure}: The indexing structure must be designed to be queryable for all possible $(k, g)$ combinations. The hierarchical property of the $(k, g)$-core leads to strong locality between cores with similar values, resulting in significant duplication. Constructing a compact structure to address these challenges is crucial.
    \item \ul{Scalable query processing}: The indexing structure must support fast query processing. The running time of the $(k, g)$-core peeling algorithm increases rapidly with data size, making it inefficient for large networks. Ensuring scalable query processing time regardless of data size is therefore another challenge.
\end{itemize}

\spara{Contributions.} Our contributions are  as follows:
\begin{enumerate}[leftmargin = *]
    \item \ul{Novel problem definition}: To the best of our knowledge, this work is the first to employ an index-based approach for efficiently discovering cohesive subgraphs in hypergraphs.
    %this work represents the first endeavour to employ an index-based decomposition approach for the efficient discovery of cohesive subgraphs in hypergraphs.
    \item \ul{Designing new indexing algorithms}: To address the issue of excessive memory usage, we have developed three innovative indexing techniques which utilises the locality characteristics in the core structures of hypergraphs.
    \item \ul{Extensive experiments}: With real-world hypergraph datasets, we conducted extensive experiments, demonstrating the efficiency of our indexing structures and querying algorithms. 
\end{enumerate}

%==================== Section 2 ====================
\section{PRELIMINARIES}\label{sec:preliminaries}

A hypergraph is represented as $G=(V, E)$, where $V$ denotes nodes and $E$ denotes hyperedges. In this paper, we consider that $G$ is both undirected and unweighted. For any subset $H\subseteq V$, 
$G[H]=(H, E[H])$ where $E[H]=\{e\cap H | e\in E \wedge e\cap H \neq \emptyset \}$. 
Within hypergraph terminology, it is key to understand the nuanced definitions of terms such as ``degree'', ``neighbours'', and ``cardinality'', especially in contrast to their meanings in traditional graph theory.
\begin{itemize}[leftmargin=*]
    \item Degree: The \textit{degree} of a node is the number of hyperedges that include the node. This differs from a traditional graph, where the degree of a node is the number of edges connected to the node.
    \item Neighbours: The \textit{neighbours} of a node are all other nodes connected to it through shared hyperedges. This contrasts with traditional graphs, where neighbours are only those nodes directly connected to a given node by an edge. 
    \item Cardinality: The \textit{cardinality} of a hyperedge is the number of nodes it contains. This is unique to hypergraphs, as edges in traditional graphs connect only two nodes, with a fixed cardinality of two.

\end{itemize}

\begin{definition}\label{def:kim2023exploring}
    (\underline{$(k,g)$-core}~\cite{kim2023exploring}). Given a hypergraph $G$, $k$, and $g$, $(k,g)$-core is the maximal set of nodes in which each node has at least $k$ neighbours appearing in at least $g$ hyperedges in an induced subhypergraph by the set of nodes.
\end{definition}

In \cite{kim2023exploring}, the uniqueness and hierarchical structure of the $(k,g)$-core are discussed. Specifically, any $(k',g)$-core $\subseteq$ $(k,g)$-core if $k' \geq k$. Similarly, any $(k,g')$-core $\subseteq$ $(k,g)$-core if $g' \geq g$. 

\begin{example}
Consider Figure~\ref{fig:graphs}(a). Nodes ${v_1,\dots,v_7}$ forms the $(2,2)$-core because each node has at least two neighbours, and for each of these neighbours, they share at least two hyperedges. Additionally, nodes ${v_1,\dots,v_5}$ are in $(1,3)$-core since each node has at least one neighbour node with whom it co-occurs in at least three hyperedges, especially $\{v_1,v_2,v_3\}$ appear together $e_1, e_2$, and $e_3$, while nodes $\{v_4,v_5\}$ appear together in hyperedges $e_4, e_5$, and $e_6$.
\end{example}

We next define $k$($g$)-coreness utilised to build indexing structures. 

\begin{definition}
    (\underline{$k$-coreness / $g$-coreness}). Given a hypergraph $G$ and $k$, the $k$-coreness of a node $x$ in $G$, denoted as $c^k(x, G)$, is the maximum $g'$ such that $x$ is in the $(k,g')$-core but not in $(k,g'+1)$-core. Similarly, the $g$-coreness of $x$, denoted as $c^g(x, G)$, is the maximum $k'$ such that $x$ is in the $(k',g)$-core but not in $(k'+1,g)$-core.
\end{definition}

$k$-coreness indicates the maximum number of hyperedges in which a node consistently co-occurs with a fixed number of neighbours., while $g$-coreness reflects how many neighbours a node consistently maintains across varying levels of $g$.

% \spara{Peeling algorithm for $(k,g)$-core.}
% \textcolor{teal}{In this paper, we modified the peeling algorithm introduced in~\cite{kim2023exploring} to address the size inflation issue associated with the algorithm. The new peeling algorithm serves as the foundation for our index construction algorithm. For the sake of intuitive analysis in the subsequent construction algorithm, we present the complexity of our modified algorithm in a simplified form, $O(P)$. The rationale and a detailed description of the peeling algorithm are provided in Appendix~\ref{appendix: peeling}.}

%==================== Section 3 ====================
\section{PROBLEM STATEMENT}\label{sec:problemStatement}

While certain properties of the $(k,g)$-core have been identified, selecting appropriate user parameters $k$ and $g$ remains a challenging and unresolved problem, as discussed in Section~\ref{sec:introduction}. Therefore, in this paper, we address this challenge by enabling end-users to adjust these parameters within online time to obtain the desired cohesive subgraph. The research question and our answer are as follows: 

\spara{Research Question.} \ul{Given a hypergraph $G$, is it feasible for a user to dynamically change the $k$ and $g$ values and promptly obtain the corresponding $(k,g)$-core?}

\spara{Answer.} To address this research question, we have developed two distinct approaches, each guided by different strategic considerations: (1) fast query processing time, and (2) space efficiency.

\begin{enumerate}[leftmargin=*]
    \item \textbf{Na\"ive Indexing Approach (\Naive):} This method directly stores the results of the $(k,g)$-core for every possible combination of $k$ and $g$. Thus, end-users can access the desired $(k,g)$-core without any complicated query processing. %This method is particularly advantageous for handling small-sized datasets.
    \item \textbf{Locality-based Space Efficient Approach:} In contrast, this approach focuses on memory efficiency. To achieve this, we introduce several innovative techniques: (1) \ul{Horizontal Locality-based Indexing (\Horizontal):} Utilises the inherent hierarchical structure of $(k,g)$-cores to minimise the amount of stored data; (2) \ul{Vertical Locality-based Indexing (\Vertical):} Employs the hierarchical structure of $(k,g)$-cores across different levels to reduce the duplication of stored nodes; and (3) \ul{Diagonal Locality-based Indexing (\Diagonal):} Focuses on the locality among non-hierarchical cores, especially diagonally adjacent $(k,g)$-cores, by incorporating hidden auxiliary nodes.
%    Note that {\Vertical} builds upon {\Horizontal}, and {\Diagonal} is further extended by incorporating both {\Horizontal} and {\Vertical}.
\end{enumerate}

%==================== Section 4 ====================

\section{\mbox{INDEXING TREE AND QUERY PROCESSING}}\label{sec:indexing_structure}

In this section, we present indexing algorithms for efficiently querying all $(k,g)$-cores to address the research question. We first introduce the Na\"ive Indexing Approach(\Naive), which prioritise query processing performance over memory efficiency. To mitigate its limitations, we then introduce Locality-based Space-Efficient Indexing approaches (\LSE s), which focus on optimising space efficiency while maintaining reasonable query processing times.

\spara{Indexing tree structure.} Figure~\ref{fig:naive} shows the layered design of the proposed indexing structure. The structure is represented as a rooted tree of height $2$ with three layers. The root node (the $g$-layer) corresponds to $g$ value, and the edges from the root node represent different values of $g$. At the $k$-layer, the edges that connect nodes from this layer to the bottom layer represent a specific $k$ value. This structure provides an organised representation of the various combinations of the $(k,g)$ values. The bottom layer consists of $(k,g)$-leaf nodes, which store nodes that are part of the corresponding $(k,g)$-core.
The choice of $g$ or $k$ as the root depends on the application context or user preferences. Notably, it can be easily reversed.

\begin{figure}[t]
\centering
\includegraphics[width=0.95\linewidth]{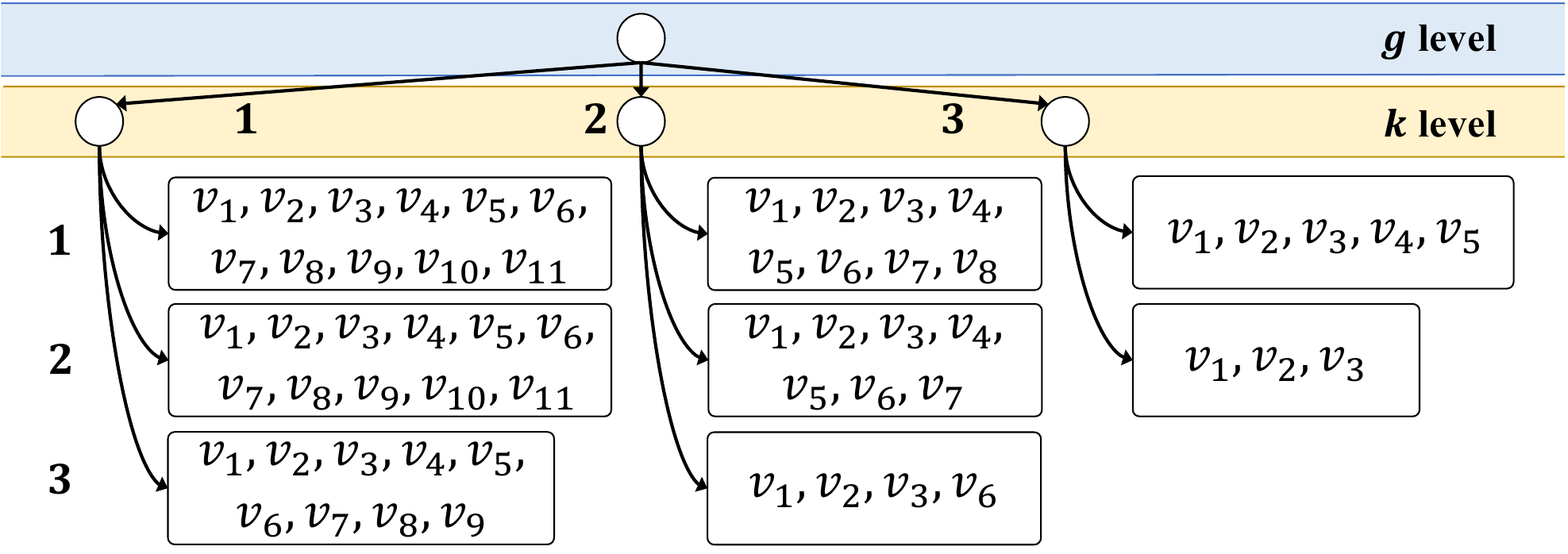}
% \vspace{-0.1cm}
\vspace{-0.2cm} 
\caption{{\Naive} indexing tree structure} 
\vspace{-0.3cm}
\label{fig:naive}
\end{figure}

\subsection{Na\"ive Indexing Approach (\Naive)}\label{sec:naive}

To immediately retrieve the $(k,g)$-core, we propose a straightforward method called the Naïve Indexing Approach. It constructs an indexing tree that allows for immediate retrieval of any $(k,g)$-core. 

\subsubsection{Indexing framework} 
\textcolor{black!0}{This text is completely invisible.}

\spara{Indexing tree construction.} The construction of the {\Naive} indexing tree involves computing all possible $(k',g')$-cores by increasing $g'$ from $1$ up to the maximum value $g^*$ in the hypergraph. Each computed $(k',g')$-core is then stored in its corresponding leaf node. 

\spara{Leaf node.} Each $(k',g')$-leaf node in the {\Naive} indexing tree contains the set of nodes that belong to the corresponding $(k',g')$-core.

\spara{Query processing.} Query processing of the {\Naive} indexing tree is to find the specific $(k,g)$-leaf, starting from the root node. The identified leaf node contains all the nodes in the $(k,g)$-core.

\subsubsection{Complexity analysis}
\textcolor{black!0}{This text is completely invisible.}

\spara{Construction time complexity.} Let denote the time complexity of the $(k,g)$-core peeling algorithm $O(P)$\footnote{The $(k,g)$-core peeling algorithm forms the basis of our index construction. For simplicity, we represent the complexity of the algorithm as $O(P)$.}, then the construction of the {\Naive} indexing tree takes $O(g^* \cdot P)$, where $g^*$ is the maximum value of $g$ in the hypergraph. 

\spara{Query processing time complexity.} Since the $(k,g)$-leaf node directly contains all the nodes in the $(k,g)$-core, the query result can be retrieved in $O(1)$ time using the {\Naive} indexing tree.

\spara{Space complexity.} In the worst case, each $(k,g)$-core may contain all nodes in $V$. Thus, the total space requires $O(k^*\cdot g^*\cdot |V|)$ space, where $k^*$ is the maximum $k'$ such that the $(k', 1)$-core is non-empty.

\begin{example}
Consider the {\Naive} indexing tree illustrated in Figure~\ref{fig:naive}, which is constructed based on the hypergraph depicted in Figure~\ref{fig:graphs}(a). This tree comprises eight $(k,g)$-leaf nodes, representing a unique combination of $k$ and $g$ values. Furthermore, each $(k, g)$-leaf node contains the nodes associated with the $(k, g)$-core it corresponds to. Within these leaf nodes, certain nodes are highly recurrent. For example, node $v_1$ appears across all $(k,g)$-leaf nodes. %This is because of its extensive connectivity and presence in numerous hyperedges.%, making it a consistent part of different $(k,g)$-cores. 
\end{example}

The Na\"ive indexing technique offers the optimal query processing time, i.e., $O(1)$.  However, it requires $O(k^* \cdot g^*\cdot |V|)$ space, making it inefficient to address huge real-world hypergraphs. To efficiently mitigate the space complexity while preserving the reasonable query processing time, we next present three indexing techniques that mainly leverage the locality properties of the $(k,g)$-core.

\begin{algorithm}[t]
\SetAlgoLined
\small
\SetKwData{break}{break}
\SetKwData{false}{false}
\SetKwData{true}{true}
\SetKwBlock{block}{\FuncSty{/* Lines 5-11 in Algorithm~\ref{alg:enum_function} \quad  */\setcounter{AlgoLine}{11}
}}

\SetKwBlock{blocktwo}{\FuncArgSty{prev} $\leftarrow \emptyset$\;\FuncSty{/* Lines 1-4 in Algorithm~\ref{alg:enum_function} \quad  */\setcounter{AlgoLine}{4}}}

\SetKwFunction{add}{add}

\KwIn{Hypergraph $G=(V,E)$, parameter $g$}
\KwOut{$\{(k,g)\text{-core} \mid k \geq 1 \text{ and } (k,g)\text{-core} \neq \emptyset  \}$}
\For{$k' \leftarrow 1 $ to $|V|$}{
    \If{$|H|\leq k$}{
    \break\;
    }
    \While{\true}{
        changed $\leftarrow \false$\;
        \ForEach{$v \in H$}{
            $N(v) \leftarrow \{(w, c(v, w))| w\in V$, \text{and } $c(v, w) \geq g \}$\;
            \If{$|N(v)| < k'$}{
                $H \leftarrow H \setminus \{v\}$\;
                changed $\leftarrow \true$\;
            }
        }
        \If{changed $=$ \false}{

       \If{$prev \neq \emptyset$}{
            $S.\add(prev \setminus H)$\;
           
           }
           \FuncArgSty{prev} $\leftarrow H$\;
            \break\;
        
        }
    }
}
\If {\FuncArgSty{prev} $\neq \emptyset$}{
$S.\add(prev)$;
}
\Return{$S$}
\caption{\mbox{\FuncSty{enum\_h}: Enum shell structures by fixing $g$}}
\label{alg:enum_h}
\vspace{-0.1cm}
\end{algorithm}

\subsection{\mbox{Horizontal Locality-based Indexing(\Horizontal)}}
In this section, we present Horizontal Locality-based Indexing, which aims to reduce space complexity by utilising hierarchical characteristics of the $(k,g)$-core.

\subsubsection{Indexing framework}
\textcolor{black!0}{This text is completely invisible.}

\spara{Indexing tree construction.} For each $g'$ from $1$ to the maximum value $g^*$, we compute the $g'$-coreness for every node. That is, we calculate $c^{g'}(v, G)$ for each node $v \in V$, where a node $v$ is part of the $(c^{g'}(v, G), g')$-core but not the $(c^{g'}(v, G) + 1, g')$-core. Only the nodes for which $c^{g'}(v, G) = k'$ are stored in the corresponding $(k',g')$-leaf node. As a result, the leaf nodes sharing the same $g'$ value do not contain redundant nodes. The total number of nodes stored under the same $g$-level branch is bounded by $|V|$. Additionally, subsequent leaf nodes, such as $(k',g')$ and $(k' + 1, g')$, are connected by a link. Pseudo description can be checked in Algorithm~\ref{alg:enum_h} and Algorithm~\ref{alg:horizontal_level_index_construction}. 

\spara{Leaf node.} Each leaf node contains both a specific value and a link to the next leaf node in the sequence. The composition of a leaf node in this structure is detailed as follows:
    \begin{itemize}[leftmargin=*]
        \item \textit{Value:} Each $(k', g')$-leaf node in the indexing tree contains a set of nodes, all of which share the same $g'$-coreness, denoted by the value $k'$. This implies that nodes having identical $g'$-coreness for a given $g'$ are grouped in the same leaf node as a value.
        \item \textit{Link:} Given a specific $g'$ value, a leaf node representing a certain ($k'$, $g'$)-coreness is linked to the subsequent leaf node that represents ($k'+1$, $g'$)-coreness. This edge is referred to as the \textit{next link}, and this linked list structure is consistently repeated across every branch corresponding to different $g$ values in the tree.
    \end{itemize}
\spara{Query processing.} To process a query $Q = (k,g)$ with the indexing tree $T$, the following procedure is employed to retrieve the $(k,g)$-core: Initially, we find the $(k,g)$-leaf node. From this leaf node, it iteratively traverses subsequent leaf nodes through \textit{next link} until reaching the terminal leaf node, which is not pointing to a further leaf node. The $(k,g)$-core is then retrieved by aggregating all nodes encountered across these traversed leaf nodes.

\begin{algorithm}[t]
\small
\KwIn{Hypergraph $G=(V,E)$}
\KwOut{Indexing tree $T$}
\SetKwData{break}{break}
\SetKw{Return}{return}
\SetKwData{false}{false}
\SetKwData{continue}{continue}
\SetKwData{prev}{prev}
\SetKwFunction{ins}{insertLeafNode}
\SetKwData{true}{true}
\SetKwFunction{enuma}{enum\_h}
\SetKwFunction{treeInit}{treeInit}
\SetKwBlock{block}{\FuncSty{/* Lines 1-2 in Algorithm~\ref{alg:naive_index_construction} \qquad\quad\   */\setcounter{AlgoLine}{2}    }}

$T \leftarrow$ \treeInit{}, $k' \leftarrow 1$\;
\For{$g' \leftarrow 1$ to $|E|$}{
    $S \leftarrow $ \enuma{$G$, $g'$}\;
    \If{$|S| = \emptyset$}{
        \break\;
    } 
    $prev \leftarrow \emptyset$\;
    \For{$s\in S$}{
         $u \leftarrow$ \ins{$T$, $k'$++, $g'$, $s$}\;
         \If{$prev \neq \emptyset$}{
             $prev$.next $\leftarrow u$\;
         } 
         $prev \leftarrow u$\;
    }
}    
\Return{$T$}
\caption{Horizontal Locality-based indexing}
\label{alg:horizontal_level_index_construction}
\end{algorithm}

%\spara{Algorithm description.} The enumeration process for {\Horizontal} indexing tree construction is presented in Algorithm~\ref{alg:enum_h}. This process is similar to {\Naive} indexing tree construction, but it includes an additional step to handle redundancies between consecutive $(k,g)$-cores, such as the $(k',g')$-core and the $(k'+1,g')$-core (Lines 13-15). This approach effectively avoids storing duplicate nodes for the same $g'$ value. The construction process for the {\Horizontal} indexing tree is presented in Algorithm~\ref{alg:horizontal_level_index_construction}. This process involves enumerating the shell structure for each value of $g$ (Line 3) and storing it along with a next link that points to the subsequent leaf node (Lines 9-11). 

\subsubsection{Complexity analysis}
\textcolor{black!0}{This text is completely invisible.}

\spara{Construction time complexity.} The time complexity for constructing the {\Horizontal} indexing tree is $O(g^* \cdot (P + k^* \cdot |V|))$. This includes an additional step compared to the {\Naive} indexing tree, namely, performing the set difference operation between pairs of linked leaf nodes. In the worst case, it requires $O(|V|)$ time for each pair. Note that $O(P)$ is the time complexity for $(k,g)$-core computation.

\spara{Query processing time complexity.} Query processing involves traversing the leaf nodes in the indexing tree $T$, starting from the $(k,g)$-leaf node and continuing until the terminal leaf node is reached. %Since this traversal is limited to leaf nodes, the time complexity is determined by the number of leaf nodes visited. 
Thus, the time complexity for retrieving the $(k,g)$-core is $O(k^* \cdot |V|)$.

\spara{Space complexity.} The {\Horizontal} technique can significantly reduce space complexity by storing each node in the hypergraph only once for each $g$ value, unlike the {\Naive} indexing approach which may store redundant node information. This efficiency is achieved through a horizontal connection of leaf nodes and the removal of redundancies across different $k$ values for the same $g$. Consequently, the space complexity is primarily determined by the number of distinct nodes for each $g$ value, leading to a complexity of $O(g^* \cdot |V|)$.

% \begin{figure*}[t]
% \begin{subfigure}[b]{0.33\linewidth} 
% \centering
% \includegraphics[width=0.99\linewidth]{figures/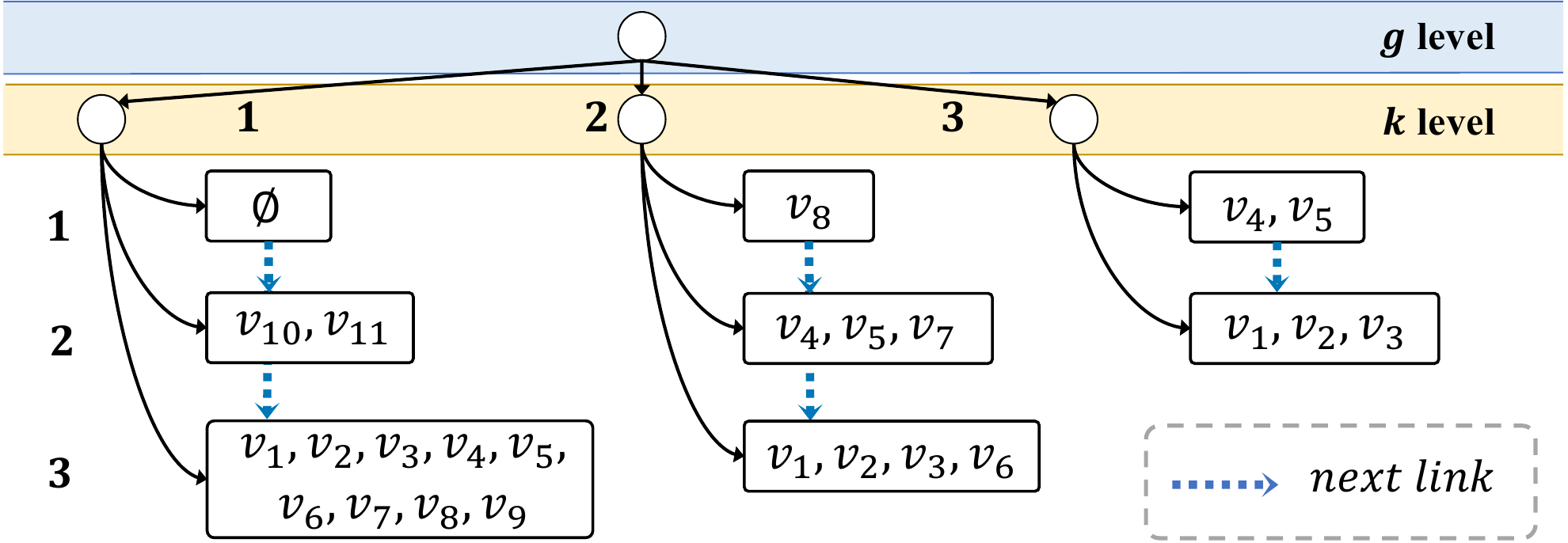}
% \vspace{-0.1cm}
% \caption{{\Horizontal} indexing tree structure} 
% \vspace{-0.3cm}
% \label{fig:hor_index_example}
% \end{subfigure}
% \begin{subfigure}[b]{0.33\linewidth} 
%     \centering
%     \includegraphics[width=0.99\linewidth]{figures/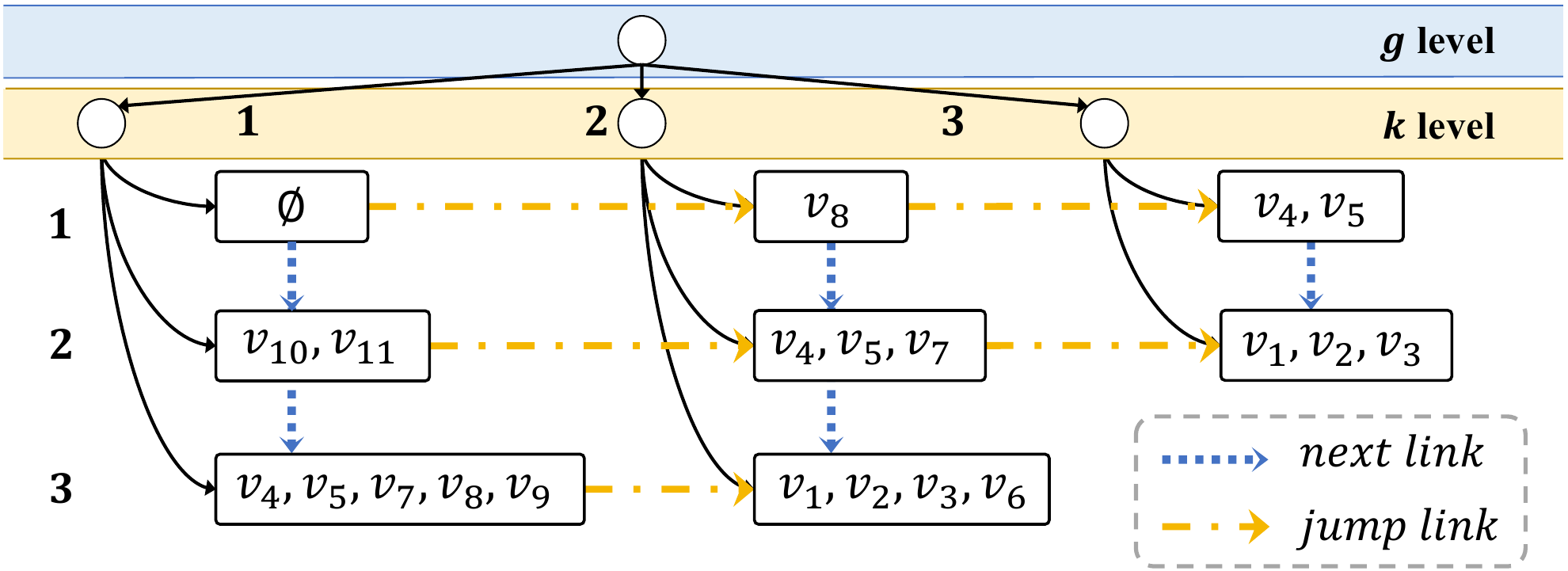}
%     \vspace{-0.2cm} 
%     \caption{{\Vertical} indexing tree structure} 
%     \vspace{-0.3cm}
%     \label{fig:vertical_tree}
% \end{subfigure}
% \begin{subfigure}[b]{0.33\linewidth} 
%     \centering
%     \includegraphics[width=0.99\linewidth]{figures/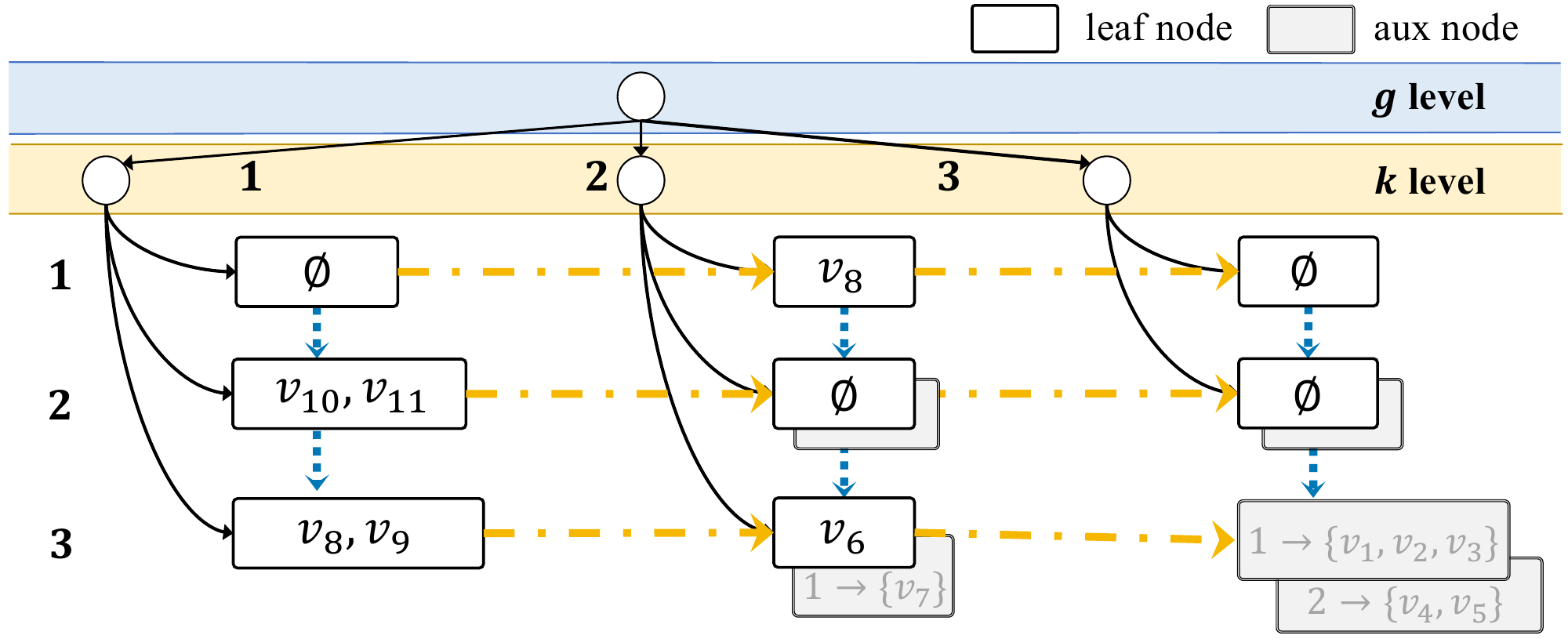}
%     \vspace{-0.2cm}
%     \caption{{\Diagonal} indexing tree structure} 
%     \vspace{-0.3cm}
%     \label{fig:diag}
% \end{subfigure}
% \caption{Indexing techniques}
% \label{fig:indexing_technique}
% \vspace{-0.3cm}
% \end{figure*}

\begin{figure}[t]
\centering
\includegraphics[width=0.9\linewidth]{}
\vspace{-0.2cm}
\caption{{\Horizontal} indexing tree structure} 
\label{fig:hor_index_example}
\vspace{-0.3cm}
\end{figure}

\begin{example}
Figure~\ref{fig:hor_index_example} presents the {\Horizontal} indexing tree applied to the hypergraph $G$ presented in Figure~\ref{fig:graphs}(a). In this representation, nodes $\{v_1,\dots,v_9\}$ are categorised within the $(3,1)$-leaf. This indicates that these nodes are included in both the $(1,1)$-core and $(2,1)$-core, i.e., $\{v_1,\dots,v_9\} = (3,1)\text{-leaf} \subseteq (2,1)\text{-leaf} \subseteq (1,1)\text{-leaf}$.
When processing a query $Q = (2,1)$, the method involves traversing from the $(2,1)$-leaf node to $(3,1)$-leaf node through the next link, subsequently aggregating the nodes from these leaf nodes to obtain the $(2,1)$-core.
Note that a node $v$ may appear in multiple leaf nodes across different $g$ branches if its $k$-coreness is greater than $1$. For example, nodes $v_1, v_2$, and $v_3$ are present in all $g$ branches.
\end{example}

%Figure~\ref{fig:hor_index_example} presents the {\Horizontal} indexing tree to the simple hypergraph $G$, as shown in Figure~\ref{fig:graphs}. In this depiction, nodes $\{v_1,\dots,v_9\}$ are categorised within the $(3,1)$-leaf. This categorisation signifies that these nodes are part of both the $(1,1)$-core and $(2,1)$-core, i.e., $\{v_1,\dots,v_9\} = (3,1)\text{-leaf} \subseteq (2,1)\text{-leaf} \subseteq (1,1)\text{-leaf}$.  Note that nodes $v_{10}$ and $v_{11}$ are exclusively associated with the $(2,1)$-core. Therefore, when processing a query $Q$ = (2,1), the approach involves traversing through the $(2,1)$-leaf and $(3,1)$-leaf, subsequently aggregating the nodes from these leaf nodes to get the $(2,1)$-core.

\subsection{Vertical Locality-based Indexing(\Vertical)}

In this section, we introduce the Vertical Locality-based Indexing technique, developed to improve space complexity by simultaneously considering both $k$ and $g$ parameters. Unlike {\Horizontal}, which focuses on overlaps within the same $g$ value, {\Vertical} addresses overlaps across different $k$ and $g$ values at the same time. For instance, while {\Horizontal} eliminates duplicate nodes between $(1,2)$-core, $(2,2)$-core, and $(3,2)$-core, it does not consider overlaps between cores like $(1,1)$-core and $(2,2)$-core or $(3,3)$-core respectively. {\Vertical} technique, therefore, introduces additional links between leaf nodes of the same $k$ value, preserving efficient query processing by accounting for duplicate relationships across both $k$ and $g$ values. 

\subsubsection{Indexing framework}
\textcolor{black!0}{This text is completely invisible.}

\spara{Indexing tree construction.} {\Vertical} indexing tree construction is based on the {\Horizontal} indexing tree. For each $k'$ value, we handle redundancies between adjacent leaf nodes, such as the $(k',g')$-leaf node and the $(k',g'+1)$-leaf node, connecting them together. The structure is illustrated in Figure~\ref{fig:vertical_tree}. As with the {\Horizontal} method, the total number of distinct nodes in each $g$-level branch is also bounded by $|V|$. While this indexing technique reduces redundancy by considering nodes with identical $k$-coreness and $g$-coreness, it does not address the case where a node has varying $k$-coreness across different branches corresponding to distinct $g$ values within the tree. Pseudo description can be checked in Algorithm~\ref{alg:vertical}. 

\spara{Leaf node.} In the Vertical Locality-based Indexing ({\Vertical}), each leaf node contains a specific value and two distinct links. The components of a leaf node are as follows: 

\begin{itemize}[leftmargin=*]  
    \item \textit{Value:} In the {\Vertical} indexing tree, each $(k,g)$-leaf node comprises a unique set of nodes. These nodes are characterised by having a $k$-coreness of exactly $g$ and a $g$-coreness of exactly $k$. By this definition, the nodes in each $(k,g)$-leaf node are exclusive to that specific coreness level, ensuring no overlap with nodes in higher coreness levels. 
    \item \textit{Link:} In the {\Vertical} indexing tree, links are categorised into two distinct types: \textit{next link} and \textit{jump link}.
    \begin{enumerate}[leftmargin=*]
          \item \underline{Next Link:} The next link, the same as in the {\Horizontal} indexing tree, connects to the subsequent leaf node in the sequence. Specifically, a $(k,g)$-leaf node is linked to the $(k+1,g)$-leaf node. This link is based on the $g$-coreness, determining the exact position of a node within the tree structure.
           \item \underline{Jump Link:} The jump link connects to the next leaf node fixing the $k$ value. For instance, a $(k,g)$-leaf node is linked to the $(k,g+1)$-leaf node. Based on the $k$-coreness, it guides the location of a node within the tree.
    \end{enumerate}
\end{itemize}

\spara{Query processing.} To process a query $Q = (k,g)$ using the indexing tree $T$, we employ a specific procedure to accurately find the $(k,g)$-core. The first step is to find the corresponding $(k,g)$-leaf node in the indexing tree. Starting from this node, the process involves iterative traversal of subsequent leaf nodes along the jump link until the terminal leaf node is reached. Subsequently, for each of these leaf nodes, traversal continues along the next link until the terminal leaf node is encountered. The $(k,g)$-core is then obtained by aggregating all nodes during these traversals across the respective leaf nodes.

\begin{figure}[t]
\centering
\includegraphics[width=0.9\linewidth]{}
\vspace{-0.2cm}
% \vspace{-0.3cm} 
\caption{{\Vertical} indexing tree structure} 
\vspace{-0.5cm}
\label{fig:vertical_tree}
\end{figure}

%\spara{Algorithm description.} 
%The process for constructing the {\Vertical} indexing tree is described in Algorithm~\ref{alg:vertical}. It builds upon the {\Horizontal} indexing tree (Lines 1-12) by identifying adjacent leaf nodes, such as $(k',g')$-leaf and $(k',g'+1)$-leaf nodes, and establishing jump links between them (Lines 13-15). Additionally, it removes redundant nodes between these connected leaf nodes (Lines 16-17).

\subsubsection{Complexity analysis}
\textcolor{black!0}{This text is completely invisible.}

\spara{Construction time complexity.} The {\Vertical} algorithm extends the {\Horizontal} approach by visiting the ($g^*-1$) children of root nodes and performing additional set difference operations between $k^*_g$ leaf nodes. 
The complexity of these operations is bounded by $O(g^* \cdot |V|)$. Therefore, the overall time complexity for constructing the {\Vertical} indexing tree is $O(g^* \cdot (P + k^* \cdot |V|))$, which is comparable to the time complexity of the {\Horizontal} indexing tree.

\spara{Query processing time complexity.} In worst case, {\Vertical} requires traversing all leaf nodes while aggregating the nodes. Hence, the time complexity for obtaining a specific $(k,g)$-core is $O(k^* \cdot g^* \cdot |V|)$.

\spara{Space complexity.} The space complexity of the {\Vertical} indexing tree is equivalent to that of the {\Horizontal} indexing tree, which is $O(g^* \cdot |V|)$. This is because, in the worst case, nodes may not be stored adjacently within the tree. For instance, if a node $v_i$ is present in the $(k,g)$-leaf node and also belongs to the $(k-1, g+1)$-core, duplication of nodes within the indexing tree is inevitable.

\begin{algorithm}[t]
\SetAlgoLined
\small
\SetKwData{break}{break}
\SetKwData{false}{false}
\SetKwData{true}{true}
\SetKwFunction{update}{updateLeafNode}
\SetKwFunction{leaf}{Leaf}
\KwIn{Hypergrpah $G=(V,E)$}
\KwOut{indexing tree $T$}

\tcc{Lines 1-12 in Algorithm~\ref{alg:horizontal_level_index_construction}}
\setcounter{AlgoLine}{12}
\For{$g \gets 1$ \KwTo $g^*-1$}{
    \For{$k \gets 1$ \KwTo $k^*_{(g+1)}$}{
    \leaf{$k$,$g-1$}.jump $\leftarrow$ \leaf{$k$,$g$}\;
    $s \leftarrow $ \leaf{$k$,$g-1$} $\setminus$ \leaf{$k$,$g$}\; 
    \update{$T$,$k$,$g-1$,$s$}\;
    }
    
}
\Return{$T$}
\caption{Vertical Locality-based indexing}
\label{alg:vertical}
\end{algorithm}

\begin{example}
Figure~\ref{fig:vertical_tree} shows the {\Vertical} indexing tree for the hypergraph $G$ from Figure~\ref{fig:graphs}(a). It demonstrates that the {\Vertical} indexing tree reduces redundancy by compressing vertical overlaps. For example, the $(3,1)$-leaf in the {\Vertical} indexing tree contains only $\{v_4, v_5, v_7, v_8, v_9\}$, as {\Vertical} removes duplicate nodes from the same $k$-level branches. To process the query $Q = (2,2)$, it first finds the $(2,2)$-leaf and uses the next links to aggregate nodes from leaves such as $(3,2)$. Then, it follows the jump links to include nodes from leaves like $(2,3)$. As a result, it retrieves nodes $\{v_1, \dots, v_7\}$, corresponding to the $(2,2)$-core. However, even after {\Vertical} compression, nodes $v_1, v_2, v_3, v_4, v_5$, and $v_7$, which are placed in diagonally positioned leaf nodes, still appear multiple times within the indexing tree.
\end{example}

\begin{figure}[h]
\centering
\includegraphics[width=0.95\linewidth]{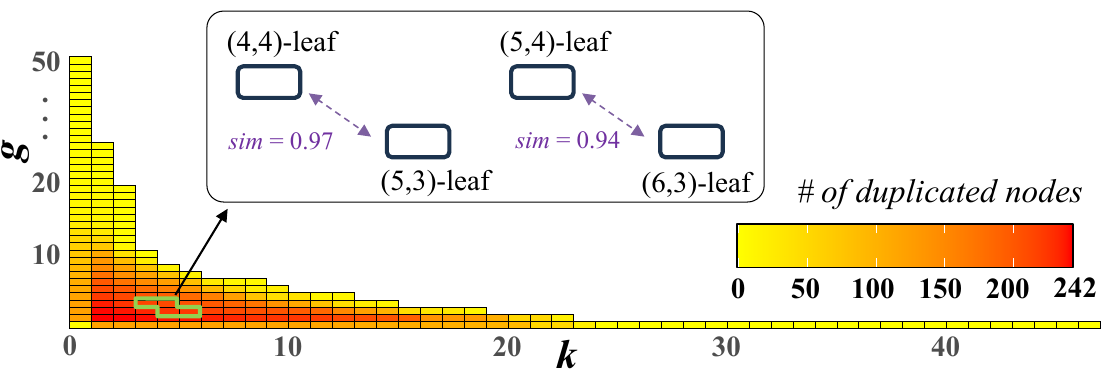}
\vspace{-0.3cm}
\caption{Number of nodes in diagonal leaf nodes} 
\vspace{-0.5cm}
\label{fig:num_of_diag}
\end{figure}

\subsection{\mbox{Diagonal Locality-based Indexing({\Diagonal})}}
In this section, we introduce the Diagonal Locality-based Indexing technique. Beyond eliminating redundancies in hierarchical $(k,g)$-cores, it captures and addresses the locality among non-hierarchical $(k,g)$-cores, particularly focusing on the relationships between diagonally placed leaf nodes.

%``Diagonally adjacent'' refers to the relative positioning of leaf nodes, such as the $(k-1,g)$-leaf node and the $(k,g-1)$-leaf node.

\spara{Rationale of {\Diagonal} approach.}
The {\Diagonal} indexing technique is motivated by the observation that strong locality exists even among non-hierarchical cores in the {\Naive} indexing tree. To validate this, we analysed the \textit{Contact} dataset~\cite{arafat2023neighborhood} to measure the number of common nodes between diagonally adjacent leaf nodes (e.g.,  the $(k-1, g)$-leaf node and the $(k, g-1)$-leaf node). Utilising the Jaccard similarity, we computed the overlap between these diagonally adjacent leaf nodes and averaged scores for cases where a leaf node has two diagonally adjacent leaf nodes. 
Figure~\ref{fig:num_of_diag} presents that there is significant overlap between diagonally adjacent leaf nodes, which points out the presence of strong diagonal locality in real-world networks. This observation supports the rationale behind the {\Diagonal} approach, which aims to reduce redundancy by identifying and storing overlapping nodes in dedicated auxiliary nodes. This process extends to diagonally adjacent auxiliary nodes, ensuring that no duplicate nodes remain. To achieve this, the {\Diagonal} indexing tree is constructed on top of the {\Vertical} indexing tree, further compressing the structure by leveraging diagonal locality.

\spara{Auxiliary nodes.}
In the {\Diagonal} indexing tree, an auxiliary node plays a key role in indexing. For any given $(k,g)$-leaf node, there exists a corresponding auxiliary node, denoted as $(k,g)$-aux node if both $k>1$ and $g>1$.  
Formally, each auxiliary node is characterised by a key-value pair. The key represents the depth and the value is a set of nodes. The depth describes the hierarchical degree of common neighbours. For a $(k,g)$-aux node, if the depth $d$ is $1$, the corresponding value includes a set of nodes that are commonly appearing in both $(k-1, g)$-leaf node and $(k, g-1)$-leaf node in the {\Vertical} indexing tree. When the depth is $2$, it represents a higher level of hierarchical common neighbours. Specifically, a set of nodes at $d=2$ in a $(k,g)$-aux node indicates their presence in both $(k-1, g)$-aux and $(k, g-1)$-aux nodes at $d=1$. This implies that these nodes are common across $(k-2, g)$-leaf node, $(k-1, g-1)$-leaf node, and $(k, g-2)$-leaf node simultaneously.
This structure of auxiliary nodes, with their depth-based key-value representation, allows for a layered understanding of node relationships and reduces the node duplication within the {\Diagonal} indexing tree.

\begin{lemma}\label{lemma:co_existence_impossible}
If a node $y$ exists in a $(k,g)$-leaf node, then it cannot coexist in the corresponding $(k,g)$-aux node.
\end{lemma}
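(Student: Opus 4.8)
The plan is to argue by contradiction using the defining property of a $(k,g)$-leaf node versus a $(k,g)$-aux node. First I would unpack the two memberships. By the definition of the leaf node in the {\Vertical} indexing tree (which the {\Diagonal} tree is built upon), a node $y$ in the $(k,g)$-leaf node has $g$-coreness exactly $k$ and $k$-coreness exactly $g$; in particular $y$ lies in the $(k,g)$-core but in neither the $(k+1,g)$-core nor the $(k,g+1)$-core. On the other hand, by the definition of the $(k,g)$-aux node at depth $d=1$, a node $y$ stored there must appear in \emph{both} the $(k-1,g)$-leaf node and the $(k,g-1)$-leaf node of the {\Vertical} tree. I would then translate these leaf-node memberships back into coreness statements: being in the $(k-1,g)$-leaf node means $y$ has $g$-coreness exactly $k-1$, and being in the $(k,g-1)$-leaf node means $y$ has $k$-coreness exactly $g-1$ (equivalently, $(g-1)$-coreness $k$ — I would fix the exact phrasing to match the paper's convention).

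The contradiction is then immediate at the level of a single coreness value: membership in the $(k,g)$-leaf node forces the $g$-coreness of $y$ to equal $k$, while membership in the $(k,g)$-aux node (via the $(k-1,g)$-leaf node component) forces the $g$-coreness of $y$ to equal $k-1$. Since $k \neq k-1$, a node cannot satisfy both, so $y$ cannot simultaneously lie in the $(k,g)$-leaf node and the $(k,g)$-aux node. For the higher-depth case ($d \geq 2$) I would note that an aux node's depth-$d$ value is, by an easy induction unrolling the recursive definition, a subset of the depth-$1$ value of the same aux node (or more precisely its nodes are common to leaf nodes $(k-d,g),\dots,(k,g-d)$, all of which sit strictly "below" the $(k,g)$-leaf node in the hierarchy), so the same coreness mismatch applies; alternatively one observes that the sets stored at different depths of one aux node are pairwise disjoint by construction, and all of them are disjoint from the leaf by the $d=1$ argument.

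The main obstacle is not mathematical depth but bookkeeping: I must be careful that the leaf-node and aux-node definitions are stated in terms of the \emph{same} reference structure (the {\Vertical} tree) and that the "exactly $k$" / "exactly $g$" characterisations of leaf-node contents are used correctly — it is the exactness (not just membership in the $(k,g)$-core) that makes the argument work, since $(k-1,g)$-core $\supseteq$ $(k,g)$-core means mere membership would not conflict. I would therefore state explicitly as the first line of the proof the exact-coreness characterisation of both node types, cite it to the definitions above, and then the two-line contradiction writes itself. A secondary point to handle cleanly is the boundary condition $k>1$ and $g>1$ required for the aux node to exist, which I would simply invoke to ensure the indices $k-1$ and $g-1$ are well-defined.
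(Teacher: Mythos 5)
Your proposal is correct and follows essentially the same route as the paper's own proof: both derive a contradiction from the exact-coreness characterisation of the $(k,g)$-leaf node (\,$k$-coreness exactly $g$, $g$-coreness exactly $k$\,) versus the strictly smaller coreness implied by membership in the $(k,g)$-aux node. Your write-up is in fact somewhat more careful than the paper's, since you explicitly trace the depth-$1$ and depth-$d$ aux contents back to the $(k-1,g)$- and $(k,g-1)$-leaf nodes of the {\Vertical} tree and note the $k>1$, $g>1$ boundary condition, whereas the paper states the coreness mismatch directly.
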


\begin{proof}
The existence of a node $y$ in a $(k,g)$-leaf node implies that the $k$-coreness of the node is $g$, and simultaneously, its $g$-coreness is $k$. However, if the same node $y$ also exists in the $(k,g)$-aux node, it implies that the $k$-coreness of $y$ is less than $g$, or its $g$-coreness is less than $k$. It is contradictory to the conditions for belonging to the $(k,g)$-leaf node. Therefore, it is not possible for a node to coexist in both a $(k,g)$-leaf node and its corresponding $(k,g)$-aux node.
\end{proof}

\begin{theorem}\label{theorem:diag}
Given a $(k,g)$-aux node, for any pair of nodes at different depths $d'$ and $d''$, there are no duplicate nodes.
\end{theorem}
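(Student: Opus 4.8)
The plan is to prove that the depth-based decomposition inside a single $(k,g)$-aux node partitions its stored nodes by depth, so that no node appears at two distinct depths $d'$ and $d''$. First I would unpack the recursive definition of the auxiliary structure: a node stored at depth $d=1$ in the $(k,g)$-aux node is, by definition, a node common to the $(k-1,g)$-leaf and $(k,g-1)$-leaf in the {\Vertical} tree; and a node stored at depth $d$ for $d\geq 2$ is a node common to the $(k-1,g)$-aux and $(k,g-1)$-aux nodes at depth $d-1$. The key observation I would isolate is that when a node is promoted to depth $d$, it is \emph{removed} from the depth-$(d-1)$ value sets of the two parent aux nodes it came from — this is exactly the ``ensuring that no duplicate nodes remain'' clause in the construction description. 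So the depth assignment of a node is the \emph{maximal} depth to which it can be promoted, and lower-depth sets are purged of it.

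The key steps, in order, would be: (1) State the construction invariant precisely: for every aux node $A$ and every depth $d$, the value set $A[d]$ consists of exactly those nodes whose ``diagonal-commonality depth'' with respect to $A$'s position equals $d$, i.e. nodes present at depth $d-1$ in both diagonal-parent aux nodes but not promotable further. (2) Argue by induction on $d$ (equivalently on $k+g$, since promotion to depth $d$ in a $(k,g)$-aux node references $(k-1,g)$ and $(k,g-1)$-aux nodes, which have strictly smaller coordinate sum). The base case $d=1$ versus $d\geq 2$: a node at $d=1$ in $(k,g)$-aux is common to two \emph{leaf} nodes; if it were also at some $d''\geq 2$ it would have to be common to two \emph{aux} nodes at depth $d''-1\geq 1$, but by Lemma~\ref{lemma:co_existence_impossible} (applied at the diagonal-parent positions) and the inductive purging, its membership at depth $1$ in those parent aux nodes was already removed when it was promoted — contradiction. (3) For the general case, suppose a node $y$ lies in both $A[d']$ and $A[d'']$ with $d'<d''$. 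Membership in $A[d'']$ means $y$ was, at the penultimate promotion step, in both diagonal parents at depth $d''-1$; but membership in $A[d']$ with $d'<d''$ means $y$ was retained at depth $d'$ in $A$ and therefore (by the construction rule that a node retained at depth $d'$ is never placed in the inputs that feed depth $d'+1,\ldots$) was \emph{not} available to be promoted past $d'$, contradicting its appearance at $d''$. (4) Conclude that the depth sets are pairwise disjoint.

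The main obstacle I anticipate is that the construction of the auxiliary nodes is described only informally in the surrounding text — in particular, the precise bookkeeping of \emph{which} nodes get removed from \emph{which} value sets upon promotion is stated as ``ensuring that no duplicate nodes remain'' rather than as a formal invariant. So the real work of the proof is to pin down that invariant (step 1) in a way that is both faithful to the intended algorithm and strong enough to drive the induction; once that invariant is in hand, the disjointness is almost immediate. A secondary subtlety is making the induction well-founded: I would order aux nodes by $k+g$ and, within a fixed aux node, order depths increasingly, and check that every reference in the promotion rule points to a strictly earlier item in this order. I would also want to explicitly invoke Lemma~\ref{lemma:co_existence_impossible} at the diagonal-parent positions to rule out the degenerate interaction between depth-$1$ aux membership and leaf membership, since that is the one place where the leaf/aux boundary matters.
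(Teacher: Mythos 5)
Your plan is sound and would yield a valid proof, but it is organised differently from the paper's argument. The paper gives a direct, non-inductive contradiction: assuming $y$ sits at depths $d'<d''$ of the $(k,g)$-aux node, it unrolls depth-$d''$ membership all the way down the recursion, concluding that $y$ must appear in the auxiliary nodes along the anti-diagonal at distance $d'$ (i.e.\ $(k-d',g)$-aux through $(k,g-d')$-aux), while depth-$d'$ membership unrolls to $y$ appearing in the \emph{leaf} nodes at those very same positions; Lemma~\ref{lemma:co_existence_impossible}, applied at any one of these positions, then gives the contradiction. You instead set up an explicit induction on $k+g$ (and on depth), using Lemma~\ref{lemma:co_existence_impossible} only at the immediate diagonal parents for the $d'=1$ case and relying on a construction invariant for $d'\geq 2$. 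The two proofs share the same essential ingredients (the recursive semantics of depth and the leaf/aux exclusion lemma); what your version buys is a well-founded, locally-checkable argument that makes the informal construction precise, whereas the paper's version avoids the bookkeeping by characterising depth membership globally through anti-diagonals of {\Vertical} leaf nodes. One caution on your step (3): the ``purging rule'' you invoke (a node retained at depth $d'$ is never fed into higher depths) is not by itself a property of the construction of $A$ --- what actually blocks $y$ from reaching depth $d''$ is that $y\in A[d']$ with $d'\geq 2$ forces $y$ into the parents' depth-$(d'-1)$ sets while $y\in A[d'']$ forces it into their depth-$(d''-1)$ sets, contradicting the \emph{inductive hypothesis} that the parents' depth sets are pairwise disjoint; phrase it that way (or fall back on the paper's leaf-level characterisation) to avoid circularity.
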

%%%

\begin{proof}
Suppose that in a $(k,g)$-aux node of the {\Diagonal} indexing tree, where $k > 1$ and $g > 1$, there exist sets of nodes at depths $d'$ and $d''$, and a node $y$ appears in both sets. For simplicity, we assume $d' < d''$. First, consider the existence of node $y$ at depth $d''$. This necessitates its existence in the leaf nodes $(k-d'', g)$-leaf node, $(k-(d''-1), g-1)$-leaf, $\cdots$, $(k, g-d'')$-leaf node. Consequently, node $y$ must also be present in the auxiliary nodes $(k-d^*, g)$-aux, $(k-(d^*-1), g-1)$-aux, $\cdots$, $(k, g-d^*)$-aux where $d^* = d''-1$. Since node $y$ is in $(k,g)$-aux, it indicates that node $y$ is iteratively appeared in all the intermediate auxiliary nodes, and finally compressed in $(k,g)$-aux.  
Since node $y$ is at depth $d'$, it indicates that node $y$ appears in the auxiliary nodes $(k-d', g)$-leaf node, $(k-(d'-1), g-1)$-leaf node, $\cdots$, $(k, g-d')$-leaf node. This is contradictory since node $y$ cannot appear at $(k,g)$-leaf node and $(k,g)$-aux node simultaneously as we have checked in Lemma~\ref{lemma:co_existence_impossible}. Thus, it is proven that in a $(k,g)$-aux node, there cannot be a pair of corresponding nodes at different depths containing duplicate nodes.
\end{proof}

\begin{corollary}\label{cor:no_duplicate}
\textcolor{black}{
In any given auxiliary node $(k,g)$-aux node, a node $v$ will not be present at multiple depths.
}
\end{corollary}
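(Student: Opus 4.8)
The plan is to derive Corollary~\ref{cor:no_duplicate} as an immediate consequence of Theorem~\ref{theorem:diag}. The corollary asserts that within a single $(k,g)$-aux node, no node $v$ appears at two distinct depths; this is precisely the contrapositive-free restatement of what Theorem~\ref{theorem:diag} already establishes, namely that for any pair of distinct depths $d'$ and $d''$ in a given $(k,g)$-aux node, the node sets stored at those depths are disjoint.

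Concretely, I would argue as follows. Fix an arbitrary auxiliary node $(k,g)$-aux (so $k>1$ and $g>1$, which is the only case in which an auxiliary node exists). Suppose, for contradiction, that some node $v$ is present at two different depths $d'$ and $d''$ with $d' \neq d''$. Then $v$ lies in the node set at depth $d'$ and also in the node set at depth $d''$, so $v$ is a duplicate node shared between the depth-$d'$ and depth-$d''$ entries. But Theorem~\ref{theorem:diag} states that for any pair of nodes at different depths there are no duplicates — i.e. the depth-$d'$ set and depth-$d''$ set have empty intersection — which contradicts the existence of $v$. Hence no such $v$ exists, and every node occurs at most one depth within $(k,g)$-aux. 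Since the auxiliary node was arbitrary, the claim holds for all auxiliary nodes.

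I do not expect any genuine obstacle here: the corollary is a direct rephrasing of the theorem, and the only thing to be careful about is making explicit the quantifier structure — Theorem~\ref{theorem:diag} quantifies over \emph{pairs} of depths, while the corollary quantifies over a single node and asserts uniqueness of its depth, so the reduction is simply ``a node at two depths would be a shared node across that pair of depths.'' If anything, the mild subtlety is noting that the depth assigned to a node in the construction is well-defined in the first place, but that follows from the construction of auxiliary nodes by iterated diagonal compression together with Theorem~\ref{theorem:diag}.

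\begin{proof}
This follows directly from Theorem~\ref{theorem:diag}. Consider any auxiliary node $(k,g)$-aux with $k>1$ and $g>1$, and suppose for contradiction that a node $v$ appears at two distinct depths $d'$ and $d''$ within this auxiliary node. Then $v$ belongs simultaneously to the node set stored at depth $d'$ and to the node set stored at depth $d''$, so $v$ is a duplicate node shared by the depth-$d'$ and depth-$d''$ entries of $(k,g)$-aux. This contradicts Theorem~\ref{theorem:diag}, which guarantees that for any pair of nodes at different depths $d'$ and $d''$ in a $(k,g)$-aux node there are no duplicate nodes. Hence no node can appear at multiple depths in any auxiliary node.
\end{proof}
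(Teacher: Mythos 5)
Your proposal is correct and matches the paper's intent: the paper states Corollary~\ref{cor:no_duplicate} as an immediate consequence of Theorem~\ref{theorem:diag} without a separate argument, and your proof is exactly that direct reduction (a node at two depths would be a duplicate across that pair of depths, contradicting the theorem). Nothing is missing.
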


\textcolor{black}{
Owing to Corollary~\ref{cor:no_duplicate}, it is assured that the incorporation of auxiliary nodes does not lead to an increase in the total number of nodes stored. Thus, this ensures that the {\Diagonal} indexing tree contains fewer nodes compared to the {\Vertical} indexing tree.
}

\subsubsection{Indexing framework}

\textcolor{black!0}{This text is completely invisible.}

\begin{table*}[t]
\caption{Summary of the algorithms}
\vspace{-0.3cm}
\label{tab:eff_test}
\centering
\small
\begin{tabular}{c||cc|cc|c|cc|c}
\hline \hline
\textbf{Algorithms}  & \multicolumn{2}{c|}{\textbf{Querying efficiency}} & \multicolumn{2}{c|}{\textbf{Space efficiency}} & {\textbf{Value of $\boldsymbol{(k,g)}$-leaf}} & \multicolumn{2}{c|}{\textbf{Links}} & {\textbf{Aux}} \\ \hline \hline
\Naive & \multicolumn{1}{c|}{$\star$$\star$$\star$$\star$$\star$} & $O(1)$ & \multicolumn{1}{c|}{$\star$} & $O(k^*\cdot g^*\cdot |V|)$ & $(k,g)$-core & \multicolumn{2}{c|}{No link} & $\times$ \\ \hline
\Horizontal & \multicolumn{1}{c|}{$\star$$\star$$\star$$\star$} & $O(k^* \cdot |V|)$ & \multicolumn{1}{c|}{$\star$$\star$$\star$} & $O(g^*\cdot |V|)$ & Nodes with  same $g$-coreness & \multicolumn{2}{c|}{Next link} & $\times$ \\ \hline
\Vertical & \multicolumn{1}{c|}{$\star$$\star$$\star$} & $O(k^*\cdot g^* \cdot |V|)$ & \multicolumn{1}{c|}{$\star$$\star$$\star$$\star$} & $O(g^*\cdot |V|)$ & Nodes with  same $k, g$-coreness & \multicolumn{1}{c|}{Next link} & Jump link & $\times$ \\ \hline
\Diagonal & \multicolumn{1}{c|}{$\star$$\star$$\star$} & $O(k^*\cdot g^* \cdot |V|)$ & \multicolumn{1}{c|}{$\star$$\star$$\star$$\star$$\star$} & $O(g^*\cdot |V|)$ & \makecell{Aux : diagonally adjacent common nodes \\ Leaf: {\Vertical} $(k,g)$-leaf $\setminus$ Aux } & \multicolumn{1}{c|}{Next link} & Jump link & $\bigcirc$ \\ \hline \hline
\end{tabular}
\vspace{-0.2cm}
\end{table*}

\spara{Indexing tree construction.} The construction process starts with the $(1,1)$-leaf node of the {\Vertical} indexing tree, progressively increasing the $g$ value. Any nodes shared by diagonally adjacent leaf nodes are stored in an auxiliary node at depth $1$, located at the intersection of these leaf nodes. If no leaf node exists at this intersection, an empty one is created, linked to the auxiliary node, and the pointers are adjusted accordingly. For example, common nodes between the $(k'+1, g')$-leaf node and the $(k', g'+1)$-leaf node are stored in the $(k'+1, g'+1)$-aux node at depth $1$. These duplicated nodes are then removed from the current leaf node. Additionally, if diagonally adjacent auxiliary nodes contain common nodes at the same depth $d'$, they are transferred to the auxiliary node at the intersection with depth $d'+1$. For instance, if a node $v$ is present in both $(k'+1, g')$-aux and $(k', g'+1)$-aux nodes at depth $d$, it is moved to the $(k'+1, g'+1)$-aux node at depth $d+1$. The duplicated nodes are then removed from the corresponding auxiliary nodes. If no further diagonally adjacent nodes are found, the duplicates are addressed via the next pointer of the current auxiliary node. This iterative process continues until the next pointer of the current leaf node reaches the maximum $k$ value.

\spara{Leaf node.} \textcolor{black}{As for its components, a leaf node in {\Diagonal} comprises a set of nodes and two types of links. A unique feature of {\Diagonal} is the presence of ``auxiliary nodes''. These nodes store common nodes found in diagonally adjacent nodes. Consequently, sets of common nodes that are diagonally adjacent are not directly stored within the leaf nodes but are instead maintained through these auxiliary nodes.}

\spara{Query processing.} %The procedure for processing a query $Q = (k,g)$ using the {\Diagonal} indexing tree $T$ follows a similar approach to that used in the {\Vertical} indexing tree to find the $(k,g)$-core. 
Given a query $Q = (k,g)$, the initial step is to identify the corresponding $(k,g)$-leaf node in the tree. From this node, the process entails traversing all reachable leaf nodes via both jump and next links. Unlike in the {\Vertical} indexing tree, where all nodes are simply aggregated to obtain the result, the {\Diagonal} indexing tree requires aggregating nodes in auxiliary nodes during this traversal. It is crucial to note that auxiliary nodes located in the starting leaf node, as well as those nodes beyond the relevant depth, are not actual nodes corresponding to the query and should not be included in the aggregation process. This distinction is key to ensuring accurate query results in the {\Diagonal} indexing tree.  

\subsubsection{Complexity analysis}
\textcolor{black!0}{This text is completely invisible.}

\begin{figure}[h]
\centering
\includegraphics[width=0.95\linewidth]{}
\vspace{-0.3cm}
\caption{{\Diagonal} indexing tree structure} 
\vspace{-0.5cm}
\label{fig:diag}
\end{figure}

\spara{Construction time complexity.} 
The construction of the {\Diagonal} indexing tree builds upon the {\Vertical} indexing tree. After constructing the {\Vertical} indexing tree, adjacent branches are traversed to identify common nodes in diagonal leaf nodes, which are then stored in auxiliary nodes. %The overall time complexity is determined by the construction of the {\Vertical} indexing tree. 
Hence, the time complexity for constructing the {\Diagonal} indexing tree is $O(g^* \cdot (P + k^* \cdot |V|))$.

\spara{Query processing time complexity.} The time complexity for query processing in the {\Diagonal} indexing tree is comparable to that of the {\Vertical} tree. While the presence of auxiliary nodes may incur additional operations, these do not significantly impact the overall time complexity. Therefore, the time complexity for query processing in the {\Diagonal} tree is still bounded by that of the {\Vertical} indexing tree, specifically $O(k^* \cdot g^* \cdot |V|)$.

\spara{Space complexity.} The space complexity of the {\Diagonal} indexing tree is comparable to that of the {\Horizontal} indexing tree, but it offers significant advantages by effectively capturing and eliminating redundancies between diagonally adjacent nodes. This approach ensures greater space efficiency, particularly in real-world applications where diagonal locality is prevalent. 

\begin{example}
Figure~\ref{fig:diag} shows the {\Diagonal} indexing tree with auxiliary nodes based on the hypergraph in Figure~\ref{fig:graphs}(a). Almost all redundant nodes are handled using auxiliary nodes. For querying $Q=(3,2)$, we start at the $(3,2)$-leaf node, skip auxiliary nodes of the starting leaf, and move through the jump nodes to the terminal auxiliary node. After one jump link, in the $(3,3)$-aux node, we store only the nodes with depth $1$, resulting in ${v_1,v_2,v_3,v_6}$.

%Figure~\ref{fig:diag} shows {\Diagonal} indexing tree with auxiliary nodes based on the hypergraph in Figure~\ref{fig:graphs}(a). It shows that almost all the redundant nodes have been handled by utilising auxiliary nodes. For querying, suppose that we have $Q=(3,2)$, we start at the $(3,2)$-leaf node but do not store the auxiliary nodes of the starting leaf node and move through the jump nodes until we reach the terminal auxiliary node. And since we have only moved through one jump link, in the $(3,3)$-aux node, we additionally store only the nodes with depth $1$. So the result would be $\{v_1,v_2,v_3,v_6\}$. 
\end{example}

\subsection{SUMMARY OF ALGORITHMS}
Each indexing technique aims to minimise memory usage while ensuring efficient query processing. Table~\ref{tab:eff_test} analyses the algorithms' efficiency, detailing their time and space complexities, leaf node structures, link types, and auxiliary (aux) nodes. Efficiency levels are indicated using star notation ($\star$), with more stars representing greater efficiency. The table highlights the trade-off between query processing and space efficiency.

%Each index indexing technique is designed to minimise memory usage while ensuring efficient query processing. Table~\ref{tab:eff_test} provides an efficiency analysis of the algorithms, including their time and space complexities. The table also details the leaf node structures, types of links, and the presence of auxiliary (aux) nodes for each algorithm. We use a star notation ($\star$) to indicate efficiency levels, where a higher number of stars represents greater efficiency. The table highlights a clear trade-off between query processing efficiency and space efficiency. Note that even with identical complexity, the actual performance can vary significantly, as complexity analysis considers extreme scenarios that may not fully reflect the characteristics of real-world hypergraphs.

%==================== Section 5 ====================
\section{EXPERIMENTS} \label{sec:experiments}

% \textbf{Contact~\cite{arafat2023neighborhood}}  & 242       & 12,704    & 68.74            & 2.42                \\ \hline
% \textbf{Congress~\cite{benson2018simplicial}} & 1,718     & 83,105    & 494.68           & 8.81                \\ \hline
% \textbf{Enron~\cite{arafat2023neighborhood}}    & 4,423     & 5,734     & 25.35            & 5.25                   \\ \hline
% \textbf{Meetup~\cite{arafat2023neighborhood}}   & 24,115    & 11,027    & 65.27            & 10.3                 \\ \hline
% \textbf{Trivago~\cite{chodrow2021hypergraph}}  & 172,738   & 233,202   & 12.68            & 3.11                \\ \hline
% \textbf{History~\cite{benson2018simplicial}}  & 1,014,734 & 895,439   & 2.28             & 1.53               \\ \hline
% \textbf{DBLP~\cite{arafat2023neighborhood}}     & 1,836,596 & 2,170,260 & 9.05             & 3.43                  \\ \hline

%We evaluate proposed indexing techniques using real-world networks. The major focus is to check time and space efficiency across different index-based approaches. 

\subsection{Experiment Setup} 
We conducted extensive experiments to evaluate the practical effectiveness of our techniques and to handle key evaluation questions (\textbf{EQs}) across diverse scenarios. 
%To validate the superiority of our techniques, we conduct an extensive experimental study. The objective is to get insights into the practical implications of our theoretical constructs and to evaluate them in diverse scenarios. Through our experiments, we aim to provide clarity on the following evaluation questions (\textbf{EQs}):

\begin{figure*}[t]
    % %\vspace{-0.2cm} 
    \centering
\includegraphics[width=0.99\linewidth]{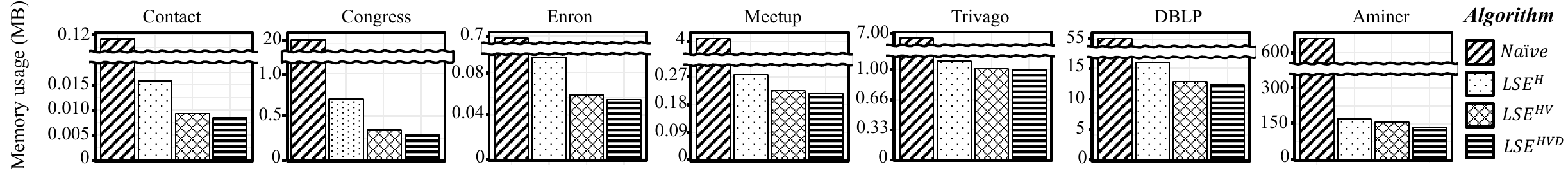}
    \vspace{-0.2cm}
    \caption{Memory usage of the indexing trees}    \vspace{-0.3cm}
    \label{fig:avg_freq}
\end{figure*}

\begin{itemize}[leftmargin=*]

\item \textbf{EQ1. Effectiveness of indexing technique}: How well do our indexing techniques avoid node duplication? %This question relates to the space complexity of the methods.
\item \textbf{EQ2. Efficiency of index construction and query processing}: How good is the performance of index construction and query processing algorithms in terms of running time? %While the index construction may be time-consuming, the query processing time should be reasonable.
\item \textbf{EQ3. Statistics of the indexing tree}: How do the statistics of indexing trees, including aspects like the \# of leaf nodes, tree depth, and the \# of nodes contained within each leaf?
\item \textbf{EQ4. Scalability}: How efficient is the query processing in benchmark hypergraphs of varying sizes? This question addresses the scalability of our techniques.
%\item \textbf{EQ5. Leaf node size distribution}: What is the distribution of sizes of the leaf nodes in each indexing tree?
\item \textbf{EQ5. Comparative analysis with the Bi-core index model}:
This \textbf{EQ} focuses on comparing our algorithm with the other index-based approach to evaluate the efficiency of query processing.
\item {\textbf{EQ6. Case study}: 
How effective is the model in real scenarios?} %This \textbf{EQ} checks the practical applicability of our model through experiments based on real-world scenarios.
%Is our model truly effective for real-world use cases? To answer this question, we will demonstrate the practicality of our model through experiments based on real-world scenarios.
\end{itemize}

%In the subsequent sections, we report experimental results and findings from each evaluation questions.

\subsection{Experimental Setting}\label{subsection:experiment_setting}

\begin{table}[t]
\caption{Real-world dataset}
\vspace{-0.2cm}
\label{tab:data}
\centering
\small
\begin{tabular}{c||c|c|c|c|c}
\hline \hline
\textbf{Dataset}  &$\boldsymbol{|V|}$         & $\boldsymbol{|E|}$         &$\boldsymbol{\mu(N(.))}$  &$\boldsymbol{k}^{\boldsymbol{*}}$ & $\boldsymbol{g}^{\boldsymbol{*}}$\\ \hline \hline
Contact~\cite{arafat2023neighborhood}  & 242         & 12,704      & 68.74                & 47  & 54          \\\hline
Congress~\cite{benson2018simplicial} & 1,718       & 83,105      & 494.68               & 368 &1003           \\ \hline
Enron~\cite{arafat2023neighborhood}    & 4,423       & 5,734       & 25.35                & 40  & 392           \\ \hline
Meetup~\cite{arafat2023neighborhood}   & 24,115      & 11,027      & 65.27                & 121 & 250            \\ \hline
Trivago~\cite{chodrow2021hypergraph}  & 172,738     & 233,202     & 12.68                & 84  & 63          \\ \hline
% History  & 1,014,734 & 895,439     & 2.28             & 1.53     & 28  & 294          \\ \hline
DBLP\cite{arafat2023neighborhood}     & 1,836,596   & 2,170,260   & 9.05                 & 279 & 221             \\ \hline 
Aminer\cite{arafat2023neighborhood}   & 27,850,748  & 17,120,546  & 8.38                 & 610 & 730             \\ \hline \hline
\end{tabular}
\vspace{-0.5cm}
\end{table}

\spara{Real-world dataset.} 
We used seven datasets, including communication networks, co-authorship networks, and event participation networks. The statistics of the dataset are shown in Table~\ref{tab:data}, where $\mu(N(.))$ is the average neighbour size and $k^*$, $g^*$ denote the maximum $k$ and $g$ values. Note that all the datasets are publicly available. 

\spara{Algorithms.}
To the best of our knowledge, our work does not have a direct competitor in the previous literature. However, to evaluate the efficiency of indexing methods, we conducted a comparative experiment using the Bi-core index~\cite{liu2019efficient}. 
% For $(k,g)$-core computation, we developed a new memory-efficient peeling algorithm, which is detailed in Appendix~\ref{appendix:peeling}. The previous approach~\cite{kim2023exploring} is not suitable to handle large real-world networks as it requires maintaining extensive data in memory, which leads to excessive memory usage.

%\spara{Experimental environment.} 
%We implemented the index construction and query processing algorithms in Python, utilising the NetworkX library~\cite{hagberg2008exploring}. The experiments were conducted on Intel Xeon 6248R processor and 128GB of RAM.

\subsection{Experimental Result}

\spara{EQ1-1. Effectiveness of indexing technique (Space usage).} 
To evaluate the impact of our indexing techniques, we measured memory usage across various real-world datasets. Figure~\ref{fig:avg_freq} shows a significant difference between the {\Naive} and {\Horizontal} indexing trees, demonstrating that using $g$-coreness substantially reduces duplication. The {\Vertical} approach also effectively reduces memory usage, and the {\Diagonal} method decreases it even further. Especially, in datasets with highly nested structures, such as the \textit{Congress} dataset~\cite{arafat2023neighborhood}, the {\Naive} indexing tree requires over $20$MB of memory. In contrast, after applying all indexing techniques, only $1.5\%$ of the memory required by the {\Naive} indexing tree is needed. It demonstrates the significant memory efficiency of our indexing techniques.

%\begin{figure}[h]
%    \centering
%\includegraphics[width=0.9\linewidth]{figures/}
 %   \caption{Change rate of node size by compression}
 %   \label{fig:change_rate}
%\end{figure}

\begin{table}[h]
\centering
\vspace{-0.2cm}
\caption{Relative differences of indexing techniques}
\vspace{-0.3cm}
\label{tab:change_rate}
\small
\begin{tabular}{l||c|c|c}
\hline\hline
\textbf{Measure}    & \textbf{Contact} & \textbf{Congress} & \textbf{Enron} \\ \hline \hline
$(1-|${\Horizontal}$|$ $/$ $|${\Naive}$|)$ $\times 100$ & $86$\%   & $96$\%  & $87$\%   \\
$(1-|${\Vertical}$|$ $/$ $|${\Horizontal}$|)$ $\times 100$      &$40$\%   & $56$\%  & $38$\%   \\
$(1-|${\Diagonal}$|$ $/$ $|${\Vertical}$|)$ $\times 100$    & $9$\%   & $12$\%  & $7$\%  \\ \hline  \hline
\end{tabular}
%\vspace{-0.2cm}
\end{table}

\spara{EQ1-2. Effectiveness of indexing technique (Node size proportion).}
In this section, we focus on the relative differences among the {\Horizontal}, {\Vertical}, and {\Diagonal} indexing trees. To verify these differences, we checked the proportion of node sizes in each index compared to the previous index. Table~\ref{tab:change_rate} presents the ratio of change in node size for each indexing tree on three representative datasets: \textit{Contact}, \textit{Congress}, and \textit{Enron}~\cite{arafat2023neighborhood}. Note that $|I|$ refers to the number of nodes in the indexing structure $I$. Even after the dramatic reduction achieved with the {\Horizontal} approach, the {\Vertical} approach addresses around $40\%$, up to $56$\%, of duplicates within the {\Horizontal} index. Additionally, a further indexing of around $10$\% is achieved through the {\Diagonal} approach.

\begin{table*}[h]
\centering
\caption{Running time for index construction and query processing}
\vspace{-0.3cm}
\label{tab:run_time}
\small
\begin{tabular}{c||cccc||ccccc} 
\hline \hline
\multicolumn{1}{c||}{\multirow{2}{*}{\vspace{-0.1cm}\textbf{Dataset}}} & \multicolumn{4}{c||}{\textbf{Index construction time (sec.)}} & \multicolumn{5}{c}{\textbf{Query processing time (sec.) }}                              \\ \cline{2-10} 
\multicolumn{1}{c||}{}                        & \multicolumn{1}{Sc|}{\textbf{{\Naive}}}    & \multicolumn{1}{c|}{{\textbf{\Horizontal}}} & \multicolumn{1}{c|}{{\textbf{\Vertical}}} & \multicolumn{1}{c||}{{\textbf{\Diagonal}}} & \multicolumn{1}{c|}{{\textbf{\Naive}}}    & \multicolumn{1}{c|}{{\textbf{\Horizontal}}} & \multicolumn{1}{c|}{{\textbf{\Vertical}}} & \multicolumn{1}{c|}{{\textbf{\Diagonal}}} & \multicolumn{1}{c}{\textbf{Peeling algorithm}}  \\  \hline\hline
Contact                                        & \multicolumn{1}{c|}{4.2553}   & \multicolumn{1}{c|}{3.9963}     & \multicolumn{1}{c|}{3.9968}   & 3.9971                        & \multicolumn{1}{c|}{0.0003} & \multicolumn{1}{c|}{0.0018}   & \multicolumn{1}{c|}{0.0028} &  \multicolumn{1}{c|}{0.0041} & 6.1684            \\   \hline
Congress                                        & \multicolumn{1}{c|}{21,642}   & \multicolumn{1}{c|}{20,963}     & \multicolumn{1}{c|}{20,963}   &         20,964        & \multicolumn{1}{c|}{0.0005} & \multicolumn{1}{c|}{0.0116}   & \multicolumn{1}{c|}{0.0775} &  \multicolumn{1}{c|}{0.3637} &       364.87     \\   \hline
Enron                                          & \multicolumn{1}{c|}{58.902}   & \multicolumn{1}{c|}{49.437}     & \multicolumn{1}{c|}{49.440}    & 49.442                        & \multicolumn{1}{c|}{0.0005} & \multicolumn{1}{c|}{0.0013}   & \multicolumn{1}{c|}{0.0156} & \multicolumn{1}{c|}{0.0616} & 5.6496                      \\ \hline
Meetup                                         & \multicolumn{1}{c|}{541.136}  & \multicolumn{1}{c|}{481.066}    & \multicolumn{1}{c|}{481.072}  & 481.081                       & \multicolumn{1}{c|}{0.0004} & \multicolumn{1}{c|}{0.0051}   & \multicolumn{1}{c|}{0.0323} & \multicolumn{1}{c|}{0.0717} & 61.610                  \\ \hline
Trivago                                        & \multicolumn{1}{c|}{389.590}   & \multicolumn{1}{c|}{370.602}    & \multicolumn{1}{c|}{370.622}  & 370.646                       & \multicolumn{1}{c|}{0.0003} & \multicolumn{1}{c|}{0.0022}   & \multicolumn{1}{c|}{0.0254} & \multicolumn{1}{c|}{0.0531} &161.96                  \\ \hline
% History                                        & \multicolumn{1}{c|}{753.532}  & \multicolumn{1}{c|}{552.659}    & \multicolumn{1}{c|}{552.684}  & 552.718                       & \multicolumn{1}{c|}{0.00030} & \multicolumn{1}{c|}{0.0012}   & \multicolumn{1}{c|}{0.0071} & \multicolumn{1}{c|}{0.0103} &1.4374                     \\ \hline
DBLP                                           & \multicolumn{1}{c|}{5,655.0}     & \multicolumn{1}{c|}{4,656.2}   & \multicolumn{1}{c|}{4,656.5} & 4,656.8                     & \multicolumn{1}{c|}{0.0004} & \multicolumn{1}{c|}{0.0162}   & \multicolumn{1}{c|}{0.0174} & \multicolumn{1}{c|}{0.1029} &961.97                      \\ \hline 
\textcolor{black}{Aminer}                                           & \multicolumn{1}{c|}{153,022}     & \multicolumn{1}{c|}{125,985}   & \multicolumn{1}{c|}{125,189} & 125,997                     & \multicolumn{1}{c|}{0.0004} & \multicolumn{1}{c|}{0.0418}   & \multicolumn{1}{c|}{0.1547} & \multicolumn{1}{c|}{0.1737} &         10462.1            \\ \hline \hline
\end{tabular}
%}
\vspace{-0.2cm}
\end{table*}

\spara{EQ2. Efficiency of index construction and query processing.} 
We analysed the index construction and query processing times for each technique. As shown in Table~\ref{tab:run_time}, index construction time increase slightly from {\Horizontal} to {\Vertical}, and then to {\Diagonal}, reflecting their interdependencies. Despite fewer computational steps, the {\Naive} approach takes much longer due to the overhead of storing many nodes across leaf nodes. For query processing, we sorted all $(k,g)$-cores by size using the {\Naive} indexing tree and selected \textbf{100 queries} representing the $1^{st}$ to $100^{th}$ percentiles \and calculated the total processing time. While {\Naive} shows consistent speed as it avoids tree traversal, query times increase gradually from {\Naive} to {\Diagonal}. On average, {\Vertical} is $5$ times slower than {\Horizontal}, and {\Diagonal} is $2$ times slower than {\Vertical}. Despite being the slowest among indexing techniques, {\Diagonal} is still up to $9,600$ times faster than the peeling algorithm. These results show that our indexing takes a reasonable time while guaranteeing online query processing.

\begin{figure}[h]
    \centering
\includegraphics[width=0.9\linewidth]{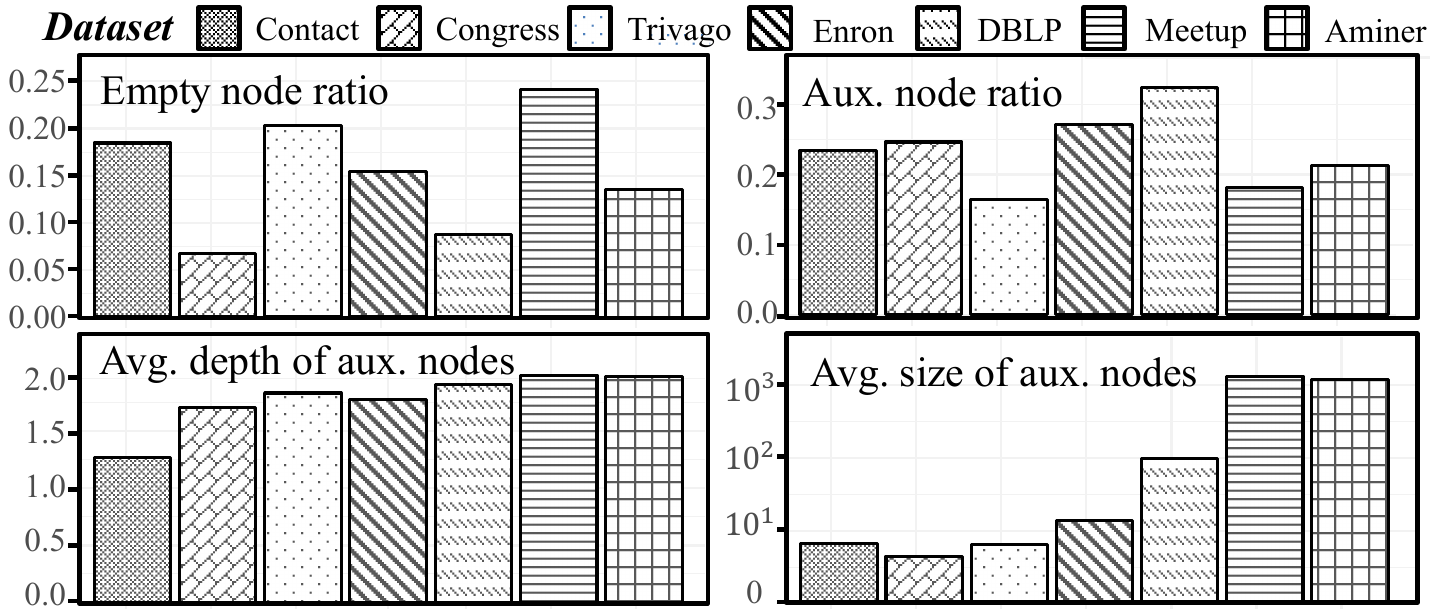}
\vspace{-0.2cm} 
    \caption{Leaf \& aux node statistics}
\vspace{-0.2cm}
    \label{fig:leaf_node}
\end{figure}

\spara{EQ3. Statistics of the indexing tree.} Figure~\ref{fig:leaf_node} presents various statistics on leaf and auxiliary (aux) nodes, offering a detailed analysis of the {\Diagonal} indexing tree structure. The proportion of empty leaf nodes to the total number of leaf nodes, which reduces duplication among linked nodes, reflects the intensity of the locality within the datasets. Additionally, the ratio of aux nodes to the total number of leaf nodes, which is close to or even exceeds $20\%$ in most datasets, indicates a high intensity of diagonal locality among many $(k,g)$-cores, pointing out the rationale of {\Diagonal}. The average depth of these auxiliary nodes mostly approaches $2$, suggesting that the auxiliary nodes handle diagonal localities consecutively over two iterations. Lastly, the average size of aux nodes tends to be proportional to the size of the datasets. These findings highlight the efficiency of the {\Diagonal} indexing tree in handling complex hypergraphs while also quantifying the locality among non-hierarchical $(k,g)$-cores, which could not be effectively captured by previous indexing techniques.

\begin{figure}[h]
\centering
\includegraphics[width=0.9\linewidth]{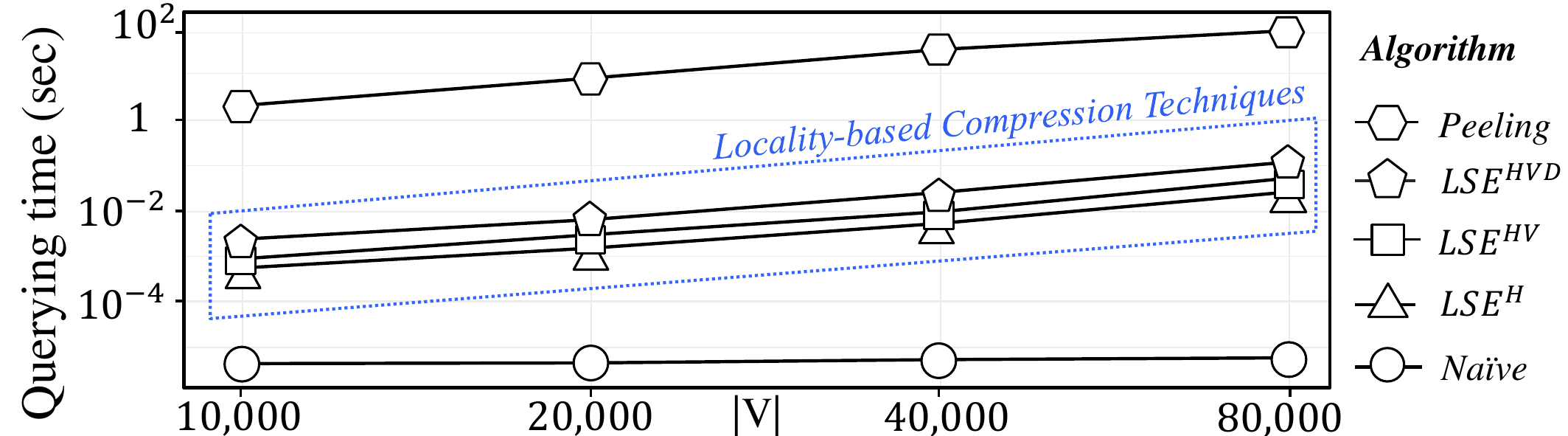}
\vspace{-0.3cm}
\caption{Scalability test (query processing time)}  
\vspace{-0.3cm}
\label{fig:syn_querying_time}
\end{figure}

\spara{EQ4. Scalability.}
Figure~\ref{fig:syn_querying_time} shows the scalability of our algorithms on synthetic hypergraphs, with node sizes of $10K$, $20K$, $40K$, and $80K$, generated using the HyperFF~\cite{ko2022growth} with default parameters. For the index-based approaches, we pre-built the indexing trees to directly compare query efficiency against the $(k,g)$-core computation algorithm. The $k$ and $g$ values were selected based on the $(k,g)$-core corresponding to $1/4$, $2/4$, and $3/4$ points in terms of the number of nodes and the average querying time for these three queries was reported. 
As illustrated in Figure~\ref{fig:syn_querying_time}, the query time for the {\Naive} indexing tree remains constant regardless of data size. We also observe that all {\lse} indexing trees demonstrate near-linear scalability. It indicates that our indexing methods are not only memory efficient but also provide reasonable query processing times.

\begin{figure}[h]
\centering
% \vspace{-0.15cm}
\includegraphics[width=0.9\linewidth]{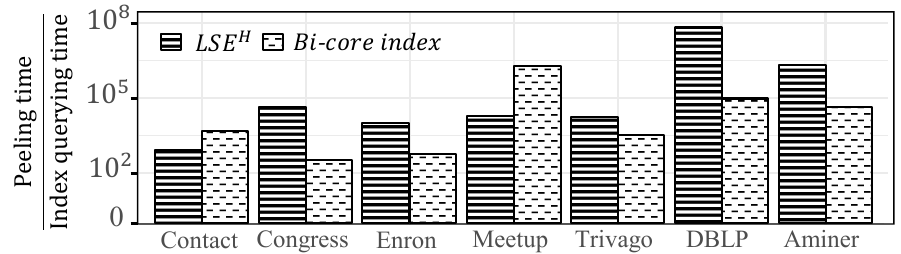}
\vspace{-0.3cm}
\caption{Comparison with Bi-core index} 
% \vspace{-0.5cm}
\label{fig:bicore}
\end{figure}
%\spara{EQ5: Leaf node size distribution}
%To capture the hierarchical structure of $(k,g)$-cores and the impact of compression techniques, we analyse the size distribution of leaf nodes in the indexing trees. Figure~\ref{fig:Size_of_leaf_nodes} shows node distributions for the \textit{Contact} dataset. In the {\Naive} indexing tree, leaf nodes with lower $g$ values tend to have larger sizes, indicating a concentration of nodes in lower-order cores. The {\Horizontal} indexing reduces duplicate nodes, though some large leaf nodes remain, particularly in the central part of the tree. The {\Vertical} and {\Diagonal} techniques achieve further reductions, leaving only a few large leaf nodes, mostly at higher $k$ and $g$ values. This trend highlights the hierarchical nature of $(k, g)$-cores and demonstrates the efficiency of our indexing techniques in structuring and organising the data.
%\begin{figure}[h]
%\centering
%\includegraphics[width=0.99\linewidth]{figures/}
%\vspace{-0.2cm}
%\caption{Distributions of nodes for each indexing tree} 
%\label{fig:Size_of_leaf_nodes}
%\end{figure}

\spara{EQ5. Comparative analysis with the Bi-core index model.}
%
%We compare our algorithm with the Bi-core index model to evaluate query processing performance against the peeling approach. The Bi-core index~\cite{liu2019efficient} aims to find $(\alpha,\beta)$-cores in bipartite graphs, where every left and right side node has a least $\alpha$ and $\beta$ degree respectively. For $(\alpha,\beta)$-core identification, we use the peeling algorithm proposed in \cite{ding2017efficient}. To measure performance, we present the ratio of index query processing time to the peeling algorithm's run time. We compare the Bi-core index with our {\Horizontal} since both approaches fix one parameter and compute the cores by iterating over the other parameter, making their methods similarly efficient. Query selection follows the method described in \textbf{EQ2}. Figure~\ref{fig:bicore} presents the execution time ratios of $(\alpha, \beta)$-core and $(k,g)$-core peeling algorithms compared to Bi-core index querying and {\Horizontal} querying. The results indicate that our algorithm's performance is comparable to or exceeds that of the Bi-core index querying in most datasets, despite considering a more complex neighbour structure. This demonstrates that our algorithm is effective even with increased complexity.
%
We compare our algorithm with the Bi-core index~\cite{liu2019efficient} to evaluate query performance against the peeling approach. The Bi-core index finds $(\alpha,\beta)$-cores in bipartite graphs, where each node on the left and right has at least $\alpha$ and $\beta$ degrees, respectively. We use the peeling algorithm from \cite{ding2017efficient} for $(\alpha,\beta)$-core identification and measure performance as the ratio of index query time to the peeling runtime. Since both the Bi-core index and our {\Horizontal} approach fix one parameter and iterate over the other, they are similarly efficient. Figure~\ref{fig:bicore} shows the execution time ratios, indicating that our algorithm performs comparably or better than the Bi-core index in most datasets, despite handling a more complex neighbour structure.

\begin{figure}[h]
\centering
\includegraphics[width=0.9\linewidth]{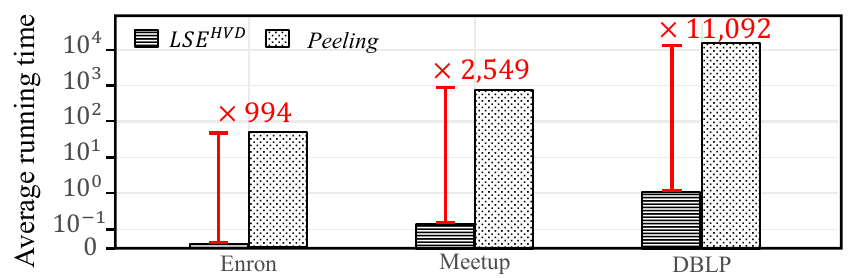}
\vspace{-0.3cm}
\caption{Case study - Size-based query processing} 
\vspace{-0.7cm}
\label{fig:case}
\end{figure}

\spara{EQ6. Case study: Size-based query processing.}
%As a case study, we explore the problem of identifying cohesive subgraphs within a specific size range without predetermined $k$ and $g$ thresholds. 
%We conduct experiments using the \textit{Enron}, \textit{Meetup}, and \textit{DBLP} datasets, comparing methods with and without indexing techniques. For the index-based approach, we employ the {\Diagonal} approach.
%To simulate real-world scenarios, we set the size lower bounds as random values between $30$ and $100$, and set the range $t$ between lower and upper bounds from $10$ to $100$, thereby generating a total of $10$ queries. We then applied these selected size queries uniformly to the two methods introduced below:
%
We explore identifying cohesive subgraphs within a specific size range without predetermined $k$ and $g$ using the \textit{Enron}, \textit{Meetup}, and \textit{DBLP} datasets, comparing methods with and without indexing. The {\Diagonal} approach is used for the index-based method. 

To simulate real-world scenarios, size lower bounds are randomly set between $30$ and $100$, with ranges $t$ between $10$ and $100$, generating $10$ queries applied to two methods:

\begin{itemize}[leftmargin=*]
\item \textbf{Without indexing tree:} $g$ is incrementally adjusted from $1$ upwards, applying the peeling strategy to find $(k,g)$ pairs satisfying size criteria. If the size drops below the lower bound, the iteration terminates, and the next $g$ is processed.
\item \textbf{With indexing tree:} For each $g$, a binary search finds indices larger than or equal to the lower bound and smaller than or equal to the upper bound, leveraging the hierarchical structure of $(k,g)$-cores. Intermediate indices are aggregated, repeating for all $g$.
%\item \textbf{Without indexing tree:} Lacking prior information, we incrementally adjust the $g$ value from $1$ upwards, applying the peeling strategy to identify $(k,g)$ pairs that satisfy the size criteria.% As the peeling progresses, the remaining graph size decreases. If it drops below the lower bound, we terminate the current iteration and proceed to the next $g$.
%
%\item \textbf{With indexing tree:} For each $g$, we first identify the maximum $k$ value. Then, we execute a binary search twice: once to find an index that is larger than or equal to the lower bound, and once to find an index that is smaller than or equal to the upper bound. Note that the size of the $(k,g)$-core is already ordered due to the hierarchical structure. We then aggregate all intermediate indexes as a result. This process is repeated for all possible $g$ values. 
\end{itemize}

Figure~\ref{fig:case} presents the running times of the two approaches. We observe that the index-based approach is significantly faster than the peeling algorithm. Note that to find all pairs of indices satisfying the size constraint, the algorithm must be executed multiple times. This demonstrates one of the key motivations for using an index-based approach. For the Enron dataset, we found an average of $366$ pairs, while for the Meetup and DBLP datasets, we found $615$ and $844$ pairs, respectively. These pairs enable efficient and effective querying within the given size constraints, demonstrating the practical advantages of our index-based approach in real-world applications.

%Figure~\ref{fig:case} presents the running time differences between the two approaches. The difference is not only influenced by the speed difference between the two algorithms but also by methodological differences. With an indexing tree, the use of binary search bounds the number of searches to $log(|V|)$ for each bound, whereas the iterative method showed approximately three times more searches. Additionally as the data size increases, the relative difference widens because the run time of the incremental approach decreases more rapidly according to data size, while the indexing approach maintains consistent performance. This efficiency demonstrates the advantages of the indexing approach for size-based querying process. Such size-based query processing can be highly beneficial in real-world applications, including large-scale network analysis, community detection, and efficient data management in environments dealing with large datasets.

%==================== Section 6 ====================

\section{RELATED WORK}\label{sec:relatedwork}

% Cohesive subgraph discovery in hypergraphs has been widely explored, with models such as  $(k,l)$-hypercore~\cite{fang2018effective}, $(k,t)$-hypercore~\cite{bu2023hypercore}, and $(k,d)$-core~\cite{arafat2023neighborhood} focusing on both neighbour size and hyperedge cardinality. The recently proposed $(k,g)$-core model, which considers the co-occurrence of neighbours across hyperedges, offers a more effective approach to identifying densely connected substructures. Although index-based methods for cohesive subgraph discovery have been extensively studied in simple~\cite{zhang2020exploring}, directed~\cite{fang2018effective}, and bipartite graphs~\cite{ding2017efficient}, directly applying these techniques to hypergraphs is inefficient due to the unique characteristics of hypergraphs, particularly the size inflation problem~\cite{huang2015scalable}. Our work is the first to introduce an efficient index-based decomposition method tailored specifically for hypergraphs, addressing these challenges and enabling more effective discovery of cohesive substructures. Detailed information on related works can be found in Appendix~\ref{appendix:relatedwork}.

%We discuss related work on cohesive subgraph discovery in hypergraphs and index-based cohesive subgraph discovery in networks.

\subsection{\mbox{Cohesive Subgraphs Discovery in Hypergraphs}}
Cohesive subhypergraph discovery is a key research area, aiming to identify tightly connected substructures within hypergraphs. Early models, such as the $(k, l)$-hypercore~\cite{limnios2021hcore}, define a maximal cohesive subgraph where each node has at least $k$ degree, and each hyperedge contains at least $l$ nodes, ensuring density and edge-richness. The $(k, t)$-hypercore~\cite{bu2023hypercore} improves adaptability by requiring each node to have at least $k$ degree and each hyperedge to include a proportion $t$ of nodes, offering greater flexibility. Another model, the $(k, d)$-core~\cite{arafat2023neighborhood}, focuses on maximal strongly induced subhypergraphs, composed of original hyperedges entirely within given nodes~\cite{bahmanian2015connection, dewar2017subhypergraphs}. Each node has at least $k$ neighbours and connects to $d$ edges. Recently, the $(k, g)$-core considers both a node's neighbours and their co-occurrences, potentially identifying denser subgraphs and providing a more comprehensive understanding of core structures in hypergraphs.

%Cohesive subhypergraph discovery has been an active research area, focusing on identifying tightly connected substructures within hypergraphs. Early models, such as the $(k,l)$-hypercore~\cite{limnios2021hcore} defined a maximal cohesive subgraph where each node has at least $k$ degree and each hyperedge includes at least $l$ nodes, ensuring node density and edge-richness. To enhance adaptability, the $(k,t)$-hypercore~\cite{bu2023hypercore} was introduced, requiring each node to have at least $k$ degree and each hyperedge to contain a certain proportion $t$ of nodes, offering more flexibility in defining cohesive subgraph structure. Another significant model, the $(k,d)$-core~\cite{arafat2023neighborhood}, focuses on creating a maximal strongly induced subhypergraph, which is a subhypergraph composed of original hyperedges entirely contained within given nodes~\cite{bahmanian2015connection, dewar2017subhypergraphs}. In this model, each node has at least $k$ neighbours and is connected to $d$ edges. Unlike these models, the $(k,g)$-core, recently proposed, considers both the neighbours of a node and their co-occurrences, potentially identifying a more densely cohesive subgraph. This model allows for a more comprehensive understanding of a core structure within a hypergraph. 

\subsection{Index-based Cohesive Subgraph Discovery} 
In this section, we discuss index-based cohesive subgraph models. While extensively studied in various networks, such models have yet to be explored in hypergraphs. For the $(k, p)$-core~\cite{zhang2020exploring} in simple networks, array-based indexing structures efficiently retrieve $(k, p)$-core queries. In directed graphs, tabular indices are used to find the $(k, l)$-core~\cite{fang2018effective}. The $(\alpha, \beta)$-core~\cite{ding2017efficient}, specific to bipartite graphs, is effectively managed using the Bi-core index by Liu et al.\cite{liu2019efficient}, which employs tree-based indexing for results and query processing. In multi-layered graphs, a $k$-core index using lattice and tree structures has been introduced\cite{liu2024fast}, enabling efficient query processing with parallel core computing. These advancements highlight efforts to enhance cohesive subgraph discovery and query efficiency. 
While hypergraphs can be modelled as simple or bipartite graphs, existing models struggle to fully capture their neighbour structure, necessitating a new approach to handle hypergraphs effectively.

\section{CONCLUSION} \label{sec:conclusion}

This work presents efficient indexing structures for the $(k,g)$-core in hypergraphs, enabling dynamic adjustment of $k$ and $g$ values. We propose the Nai\"ve indexing approach and the Locality-based Space Efficient indexing approach, along with three compression techniques to address space complexity. Experiments on real-world and synthetic hypergraphs confirm the efficiency of our methods. Future directions include extending our approach to dynamic hypergraphs to support efficient updates and parameter flexibility as hypergraphs evolve.

\bibliographystyle{ACM-Reference-Format}
\bibliography{sample-base}

%%
%% If your work has an appendix, this is the place to put it.

\end{document}